\newtheorem{lemma}{Lemma}
\newtheorem{corollary}{Corollary}
\newtheorem{proposition}{Proposition}
\newdefinition{assumption}{Assumption}
\newdefinition{rmk}{Remark}
\newdefinition{algorithm}{Algorithm}
\newdefinition{problem}{Problem}
\newdefinition{definition}{Definition}
\newdefinition{appx}{Appendix}
\newproof{pf}{Proof}
\DeclareMathOperator{\tr}{Tr}
\DeclareMathOperator{\supp}{supp}
\DeclareMathOperator{\pr}{Pr}
\DeclareMathOperator*{\argmin}{arg\,min}
\def\ps@pprintTitle{%
  \let\@oddhead\@empty
  \let\@evenhead\@empty
  \let\@oddfoot\@empty
  \let\@evenfoot\@oddfoot
}
\begin{document}
\begin{frontmatter}
\title{On-line Bayesian parameter estimation in general non-linear state-space models: A  tutorial and new results\tnoteref{t1}}
\author[UofA]{Aditya Tulsyan}\ead{{tulsyan@ualberta.ca}}
\author[UofA]{Biao Huang}\ead{biao.huang@ualberta.ca}
\author[UBC,MIT]{R. Bhushan Gopaluni}\ead{bhushan.gopaluni@ubc.ca}
\author[UofA]{J. Fraser Forbes}\ead{fraser.forbes@ualberta.ca}
\address[UofA]{Department of Chemical and Materials Engineering, University of Alberta, Edmonton, AB T6G 2G6, Canada.}
\address[UBC]{Department of Chemical and Biological Engineering, University of British Columbia, Vancouver, BC V6T 1Z3, Canada.}
\address[MIT]{Department of Chemical Engineering, Massachusetts Institute of Technology, Cambridge, MA 02139, USA.}
\tnotetext[t1]{A condensed version of this article has been published in: Tulsyan, A., Huang, B., Gopaluni, R.B., Forbes, J.F. ``On simultaneous on-line state and parameter estimation in non-linear state-space models". {Journal of Process Control}, vol 23, no. 4, 2013. Authors' addresses: A. Tulsyan, B. Huang and J.F. Forbes are with the Computer Process Control Group, Department of Chemical and Materials Engineering, University of Alberta, Edmonton T6G-2G6, Alberta, Canada, (e-mail: \{tulsyan; biao.huang; fraser.forbes\}@ualberta.ca); and R.B. Gopaluni is with the Process Modeling and Control Lab, Department of Chemical and Biological Engineering, University of British Columbia, Vancouver V6T-1Z3, BC, Canada, (e-mail: bhushan.gopaluni@ubc.ca).}

\begin{abstract}
On-line estimation plays an important role in process control and monitoring. Obtaining a theoretical solution to the simultaneous state-parameter estimation problem for non-linear stochastic systems involves solving complex multi-dimensional integrals that are not amenable to analytical solution. While basic sequential Monte-Carlo (SMC) or particle filtering (PF) algorithms for simultaneous estimation exist, it is well recognized that there is a need for making these on-line algorithms non-degenerate, fast and applicable to processes with missing measurements. To overcome the deficiencies in traditional algorithms, this work proposes a Bayesian approach to on-line state and parameter estimation. Its extension to handle missing data in real-time is also provided. The simultaneous estimation is performed by filtering an extended vector of states and parameters using an adaptive sequential-importance-resampling (SIR) filter with a kernel density estimation method. The approach uses an on-line optimization algorithm based on Kullback-Leibler (KL) divergence to allow adaptation of the SIR filter for combined state-parameter estimation. An optimal tuning rule to control the width of the kernel and the variance of the artificial noise added to the parameters is also proposed. The approach is illustrated through numerical examples.
\end{abstract}
\begin{keyword}
on-line estimation, Bayesian methods, particle filters, missing measurements, stochastic non-linear systems
\end{keyword}
\end{frontmatter}
\section{Introduction}
\label{sec:Chap1S1}
Recent advances in high speed computation have allowed the process industries to use complex high-fidelity non-linear dynamic models, such as in: a fermentation bioreactor \cite{S2010}; polymerization \cite{AK1992}; and petroleum reservoirs \cite{E2007}. Implementing advanced control strategies or monitoring process behaviour require real-time data processing for on-line estimation of the key process states and model parameters, which are either unmeasured or unknown. An extensive literature is available on on-line state estimation using sub-optimal Bayesian filters, such as extended Kalman filters (EKFs), unscented Kalman filters (UKFs), approximate grid-based filters (GBFs), and particle filters (PFs) \cite{S1998,A2002,RB2006}; however, their extension to on-line state-parameter estimation has received attention only recently. 

In the past 15 years, several algorithms have been proposed to solve the simultaneous state-parameter estimation problem in real-time using likelihood and Bayesian derived methods. Despite the advances in SMC methods, which provide a good approximation to the optimal non-linear filter under weak assumption, simultaneous state-parameter estimation is a long-standing problem \cite{ADT2005}. This is due to the non-trivial complexities introduced with on-line estimation of the unknown model parameters \cite{C2005}. This paper considers simultaneous on-line state-parameter estimation in non-linear stochastic systems under the Bayesian framework. The existing and current developments in both Bayesian and likelihood based methods for on-line state and parameter estimation are first briefly reviewed. An exposition of parameter estimation using Bayesian and likelihood based methods can be found in \cite{Kantas2009}.

The central idea of simultaneous on-line Bayesian estimators is certainly not new. A customary approach involves selecting a prior distribution for the model parameters followed by augmenting it with the states to form an extended state vector \cite{K1998}. Theoretically, it casts the simultaneous state and parameter estimation problem into a unified filtering framework; however, due to lack of ergodicity and exponential forgetting of the joint state-parameter filter, coupled with successive resampling steps, employing this approach with any standard SMC algorithm often results in parameter sample degeneracy \cite{ADT2005,DT2003}. In other words, SMC approximation of the marginalized parameter posterior distribution is represented by a single Dirac delta function. It also causes error accumulation in successive Monte Carlo (MC) steps, which in terms of $L^p$ norm, grows exponentially or polynomially in time \cite{Kantas2009}.

A pragmatic approach to reduce parameter sample degeneracy and error accumulation in successive MC approximations is to introduce diversity to the parameter samples. This is done by adding artificial dynamics to the parameters ({e.g.,} random walk) in the extended state vector \cite{K1998,H2001}. In practice, artificial dynamics approach (ADA) has been implemented with several on-line Bayesian estimators (auxiliary SIR filter (ASIR) \cite{LW2001}, Rao-Blackwellised particle filter (RBPF) \cite{GS2003}). While this approach reduces parameter sample degeneracy and error accumulation in successive MC steps, adding artificial dynamics to the parameters, often results in over-dispersed posteriors, which is also commonly referred to as the variance inflation problem \cite{LW2001}. To overcome the posterior variance inflation problem, a kernel density estimation method is proposed  in \cite{LW2001,W1993}, in which the degenerated approximation of the marginalized parameter posterior distribution is substituted by a kernel approximation (e.g., Gaussian or Epanechnikov). The artificial dynamics approach together with kernel density estimation method efficiently introduces parameter sample diversity and can be used for state-parameter estimation in general non-linear state-space models (SSMs) with non-Gaussian noise; however, there are several limitations of this approach as summarized in \cite{Kantas2009}: (a) transforming the problem by adding artificial noise modifies the original problem, so that, it becomes hard to quantify the bias introduced in the resulting parameter estimates; and (b) the dynamics of the parameters are related to the width of the kernel and the variance of the artificial noise, which are often difficult to fine tune. For the first issue, \cite{ABBF2012} proposed the use of posterior Cram\'{e}r-Rao lower bound (PCRLB) \cite{T1998} as a benchmark for error analysis of the parameter estimates obtained using the artificial approach; whereas, for the second issue, no practical solution exists. 

The authors in \cite{C2005} used an ASIR filter for on-line state-parameter estimation with a priori knowledge based kernel width tuning rule. Compared to the SIR filter, an ASIR filter is a one-step look-ahead filter which offers an advantage by allowing importance sampling from the high likelihood region \cite{P1999}; however, the superiority of ASIR to SIR is case dependent \cite{JD2008}. Most importantly, poor performance of ASIR filter for systems with large process noise \cite{R2004} coupled with higher computational cost (compared to the SIR filter) \cite{ZC2003}, often renders it impractical for on-line applications.

The Resample-Move is an alternate on-line Bayesian estimation approach which introduces parameter sample diversity through a Markov chain Monte Carlo (MCMC) step \cite{G1998,GB2001,LC2002,C2002}. To avoid increase in the memory requirements with the MCMC step, use of a fixed dimensional sufficient statistics has also been proposed in the on-line Bayesian parameter estimation context \cite{A1999}. As opposed to the methods based on kernel or artificial dynamics, Resampling-Move algorithm has the advantage of introducing diversity without perturbing the joint state-parameter target distribution. Unfortunately, MCMC/sufficient statistics based algorithms  are known to result in approximation errors, which accumulate at least quadratically in time \cite{Kantas2009,A1999}. This problem has also been illustrated in \cite{A2004} using a sufficient statistics method. Finally, unlike the ADA, applicability of the Resample-Move approach is restricted to a certain class of low dimensional non-linear models, for some of which, tractable solution to the estimation problem is also available \cite{DM2002,S2002}. 

Apart from the developments in Bayesian estimation, maximum likelihood (ML) based algorithms for on-line parameter estimation is also an active area of research. Unlike the Bayesian estimators, where the focus is on the simultaneous state-parameter estimation, ML based methods are primarily focussed on solving the parameter estimation problem. A standard approach to on-line ML parameter estimation is the gradient method. The gradient method requires recursive computation of the likelihood of the measurements and its gradient with respect to the parameters, which is also referred to as the score function. Other than in simple models, such as in linear SSMs with Gaussian noise \cite{KS1992} or in finite state-space hidden Markov models (HMMs) \cite{LH2002}, it is impossible to exactly solve the likelihood and the score functions \cite{P2011}, and one has to resort to the use of some suitable approximations. In \cite{P2011,P2005}, use of SMC methods to approximate the likelihood and score functions for estimation using on-line gradient method is proposed. As pointed in \cite{Kantas2009,A2004}, for large dimensional problems, gradient approach scales poorly in terms of its components. 

An alternate ML approach is the on-line expectation maximization (EM) algorithm, which unlike on-line gradient method, is known to be numerically more stable \cite{A2004}. Unfortunately, like the gradient method, on-line EM algorithm can be implemented exactly only in linear SSMs with Gaussian noise \cite{E2002} and in finite state-space HMMs \cite{OC2011}. Recently, SMC based on-line EM algorithm for parameter estimation in changepoint models \cite{Y2012}, and in certain classes of the non-linear SSMs \cite{DDS2009,C2009}, for which the likelihood function belongs to the exponential family of distributions have appeared. Both on-line gradient and EM algorithms have computational complexity, which is quadratic in the number of particles used in the SMC approximation of the densities of interest. To develop computationally cheaper versions of the algorithm, pseudo on-line EM method for finite state-space HMMs \cite{R1997} and for non-linear SSMs \cite{ADT2005} have been proposed. Compared to the on-line gradient and EM algorithm, the pseudo on-line EM algorithm is computationally lighter, but fails to yield asymptotically efficient (unbiased and minimum variance) estimates. Finally, the pseudo on-line EM algorithm requires the stationary distribution of the states, which may not be always known in practice. 
\begin{table*}[t]
\caption{Summary of the Bayesian and likelihood based methods for on-line state-parameter estimation (adapted from \protect\cite{Kantas2009}). In this table, $N$ is the number of particles used in SMC approximation, $T$ is the final sampling time, and $L$ is the number of measurements in each block of data (see \protect\cite{ADT2005} for further details).}
\centering
\begin{tabular}{|c|c|c|c|c|}
\hline 
Method & Pros & Cons & Comp. cost \\ 
\hline 
Artificial Dynamics & Standard SMC applicable & Distribution altered & $\mathcal{O}(NT)$ \\ 
(Bayesian)  & No optimization involved & Difficult to tune dynamics & • \\ 
\hline 
Resample-Move  & Distribution unaltered & Restricted model class & $\mathcal{O}(NT)$ \\ 
 
(Bayesian) & No optimization involved & Degeneracy problem &\\ 

• &   & Scalability issues & \\ 
\hline 
On-line Gradient  & Asymptotically efficient & Locally optimal & $\mathcal{O}(N^2)$ \\ 
(ML) & Generally applicable & Scalability issues & per update \\ 
• & •  & Expensive & • \\ 
\hline 
On-line EM & Asymptotically efficient & Locally optimal & $\mathcal{O}(N^2)$ \\ 
(ML) & • & Restricted model class & per update\\ 
• & • & Expensive & • \\ 
\hline 
On-line EM pseudo & Minimal tuning & Needs stationary distribution & $\mathcal{O}(NL)$ \\  
(ML) & No degeneracy for small L & Loss of efficiency & per update \\ 
\hline 
\end{tabular} 
\label{T0}
\end{table*}

The key advantage of using ML estimators, such as on-line gradient \cite{P2011,P2005} and EM algorithm \cite{DDS2009,Y2012} is that these methods yield asymptotically efficient estimates, at least in theory; however, in many situations, where the likelihood function is non-convex in model parameters (for e.g., in non-linear SSMs with non-Gaussian noise), numerical optimization routines either yield locally optimal (or biased) estimates \cite{A2004,C2009} or require careful tuning of the algorithm parameters \cite{Kantas2009}. Finally, high computational cost of  ML based algorithms (compared to  Bayesian estimators) coupled with applicability to a restricted non-linear model class, renders ML based methods unsuitable for processes, that require fast on-line estimators. Bayesian methods, on the contrary are `optimization-free' estimators, which allow these methods, to be fast, and free from issues related to optimization. Comparisons between the ML and Bayesian based methods for parameter estimation are further drawn in Section \ref{sec:Chap1S7}. A summary of different Bayesian and ML based algorithms, including their advantages and disadvantages is presented in Table \ref{T0}.

In the next section, the motivation and the contributions of this paper are provided.
\section{Motivation and contributions}
The existing literature on Bayesian and likelihood based methods for on-line state-parameter estimation assumes that measurement will be available at all sampling time; however, in practice, missing measurements are common in the process industries, where measurements may not arrive or be available at all sampling time instants. The importance of developing algorithms under missing measurements is well recognized \cite{G1995}. Existing literature addresses the issues related to missing data in linear \cite{SS2000} and non-linear \cite{G2008} systems only under an off-line setting. Unfortunately, these methods cannot handle missing data in real-time.

In this paper, a complete approach to on-line Bayesian state and parameter estimation in non-linear SSMs with non-Gaussian noise is developed, using an extended state vector representation with artificial dynamics for the parameters. Since this approach treats the simultaneous state and parameter estimation problems as the same, it will simply be referred to as an estimation problem unless otherwise warranted. Due to the inherent limitations of the EKF and UKF based simultaneous state-parameter estimators, a particle based SIR filtering approach is used. The choice of the SIR filter is motivated by the fact that it is relatively (compared to ASIR filter) less sensitive to large process noise and is computationally less expensive. Furthermore, the importance weights are easily evaluated and the importance functions can be easily sampled \cite{R2004}. 

It is emphasized that the PFs can be made arbitrarily accurate by simply increasing the number of particles; however, this comes at a computational cost. Several authors have focussed on this issue and developed methods, which either allows adaptation of the particle sample size \cite{SS2006,FL2007} or the adaptation of the proposal distribution from which the particles are sampled \cite{DGA2000,F2008}. Performance of PFs is closely related to the ability to sample particles in state-space regions, where the posterior is significant \cite{P1999}. Perfect adaptation of the particle size or choice of an efficient proposal density for PFs is a long-standing topic (see \cite{CMO2008} for recent developments in this area). 

The following are the main contributions of this paper: (a) an adaptive SIR (Ad-SIR) filter for on-line state-parameter estimation in general non-linear SSMs with non-Gaussian noise is proposed and derived; (b) an optimal tuning rule to control the width of the kernel, and the variance of the artificial noise is proposed; (c) an on-line optimization algorithm based on KL divergence is used to project importance samples around the region of high likelihood, which allows adaptation of the SIR filter for on-line state-parameter estimation; (d) an extension of the algorithm to handle missing measurements in real-time is also presented; and (e) the efficacy of the algorithm is illustrated through numerical examples. 

The proposed algorithm can estimate states and parameters of both time-invariant and slowly time-variant stochastic non-linear systems. It exhibits good performance even for systems with large process or measurement noise.  A distinct advantage of the proposed algorithm is that it can also estimate parameters of the noise models. This particular feature is crucial, since filtering performance for any linear or non-linear filter depends on accurate characterization of the state and measurement noise models \cite{B2011}.
\section{Problem formulation}
\label{sec:Chap1S2}
Consider the following class of discrete-time, stochastic non-linear SSMs:
 \begin{subequations}
 \label{eq:Chap1E1}
 \begin{align}
     {X}_{t+1} ={f}_t({X}_{t},{u}_{t},{\theta}_t,V_t) , \label{eq:Chap1E1a}\\
    {Y}_t ={g}_t({X}_{t},{u}_{t},{\theta}_t,W_t),\label{eq:Chap1E1b}
 \end{align}
 \end{subequations}
where ${X}_t \in {\mathcal{X}} \subseteq \mathbb{R}^{n}$ and ${Y}_t \in {\mathcal{Y}} \subseteq \mathbb{R}^{m}$ for $t\in\mathbb{N}$ are the state and measurement processes, respectively. Here ${\mathbb{R}:=(-\infty, \infty)}$ and ${\mathbb{N}:=\{1,2,\dots,\}}$. $X_t\in {\mathcal{X}}$ is a Markov process, which is either partially or fully hidden, and ${Y_t\in {\mathcal{Y}}}$ may include missed measurements; ${{u}_{t} \in \mathcal{U}\subseteq\mathbb{R}^{p}}$ and ${{\theta}_t \in {\Theta} \subseteq \mathbb{R}^{r}}$ are the time-varying or time-invariant control variables and model parameters, respectively. The process and measurement noise are represented as ${V_t\in\mathbb{R}^n}$ and ${W_t \in\mathbb{R}^m}$, respectively. ${f_t(\cdot)}$ is a $n$-dimensional state mapping function and ${g_t(\cdot)}$ is a $m$-dimensional output mapping function, each being non-linear in its arguments, and possibly time-varying, such that ${{f}_t:=\mathcal{X}\times \mathcal{U}\times {\Theta}\times\mathbb{R}^n  \rightarrow \mathcal{X}}$ and ${{g}_t:=\mathcal{X}\times \mathcal{U}\times {\Theta}\times\mathbb{R}^m \rightarrow \mathcal{Y}}$. The assumption on (\ref{eq:Chap1E1}) is discussed next.
\begin{assumption}
\label{assumption:Chap1A1}
${V}_t\in\mathbb{R}^n$ and ${W}_t\in\mathbb{R}^m$ are the mutually independent sequences of independent random variables described by the probability density functions (pdfs)  ${p}(v_t|\cdot)$ and $p(w_t|\cdot)$, respectively. The pdfs  are known {a priori} in their classes ({e.g.}, Gaussian; Binomial)  and are parametrized by a finite number of moments ({e.g.}, mean; variance). If the moments are unknown, it can be augmented with the model parameter set ${\theta}_t \in {\Theta}$.
\end{assumption}
Since ${\theta}_t \in {\Theta}$ does not have an explicit transition function like $f_t(\cdot)$ for $X_t\in\mathcal{X}$, artificial dynamics are introduced, such that ${\theta}_t \in {\Theta}$ evolves according to
\begin{align}
\label{eq:Chap1E2}
{\theta}_{t+1}={\theta}_{t}+{\xi}_{t},
\end{align}
where ${\xi}_{t}\in\mathbb{R}^r$ is a sequence of independent Gaussian random variables realized from $\mathcal{N}(\xi_t|0, \Sigma_{\theta_t})$, independent of the noise sequences ${V}_t\in\mathbb{R}^n$ and ${W}_t\in\mathbb{R}^m$. The dynamics of $\theta_t$ in  (\ref{eq:Chap1E2}) is governed by the artificial noise variance ${\Sigma_{\theta_t}\in\mathcal{S}_+^r}$, where $\mathcal{S}_+^r$ is a cone of positive semi-definite matrix. Often $\Sigma_{\theta_t}$ is unknown, and requires careful tuning. The formulation in (\ref{eq:Chap1E2}) is the ADA, which avoids the parameter degeneracy problem discussed in Section \ref{sec:Chap1S1}, and further allows for estimation of time-varying parameters. 

Equations (\ref{eq:Chap1E1}) and (\ref{eq:Chap1E2}) together represent an extended SSM. For notational simplicity, the extended state vector is defined as ${Z_t\triangleq\{X_t, \theta_t\}}$, such that ${Z_t\in \mathcal{Z}\subseteq\mathbb{R}^{s=n+r}}$. Throughout this paper ${Z_t\in \mathcal{Z}}$ will be considered; however, distinction between the states and parameters will be made, as required. Equations (\ref{eq:Chap1E1}) and (\ref{eq:Chap1E2}) can be represented as:
\begin{subequations}
 \label{eq:Chap1E3}
 \begin{align}
 X_0\sim&p(x_0);\quad{X}_{t+1}|Z_t\sim p({x}_{t+1}|z_{t}); \label{eq:Chap1E3a}\\
 \theta_0\sim&p(\theta_0);~~\quad {\theta}_{t+1}|\theta_t\sim p({\theta}_{t+1}|\theta_{t}); \label{eq:Chap1E3b}\\
&\quad\quad~~~~~~~~~~~{Y}_t|Z_t \sim p({y}_{t}|{z}_{t}),\label{eq:Chap1E3c}
 \end{align}
 \end{subequations}
where: the Markov process ${{X}_t\in\mathcal{X}}$ is characterized by its initial density $p(x_0)$ and a transition density $p(x_{t+1}|z_t)$, while the Markov process ${{\theta}_t\in\Theta}$ is characterized by its initial density $p(\theta_0)$ and a transition density $p(\theta_{t+1}|\theta_t)$. The measurement ${Y_t\in\mathcal{Y}}$ is assumed to be conditionally independent given ${Z_t\in\mathcal{Z}}$, and is characterized by the conditional marginal density $p({y}_{t}|{z}_{t})$.  The representation in (\ref{eq:Chap1E3}) includes a wide class of non-linear time-series models, including (\ref{eq:Chap1E1}). For the sake of clarity, the input signal ${u}_{t}\in\mathcal{U}$ is omitted in (\ref{eq:Chap1E3}); however, all the derivations that appear in this paper hold with ${u}_{t}\in\mathcal{U}$ included. 

The main problems addressed in this paper are stated next.
\begin{problem}
\label{problem:Chap1P1}
The first problem aims at computing the state-parameter estimate of ${{Z_t\in\mathcal{Z}}}$ in real-time using $\{{u}_{1:t};{y}_{1:t}\}$; wherein, ${y}_{1:t}\triangleq\{{y}_1,\dots,{y}_t \}$  is a vector of measured outputs corresponding to the input sequence ${u}_{1:t}\triangleq\{{u}_1,\dots,{u}_t \}$.
\end{problem}
\begin{problem}
\label{problem:Chap1P2}
The second problem aims at computing the state-parameter estimate of ${Z_t\in\mathcal{Z}}$ in real-time using $\{{u}_{1:t};{y}_{t_1:t_\gamma}\}$; wherein, the measurements arrive at random sampling time instants, such that only $\{y_{t_1},\dots,y_{t_\gamma}\}$ out of $y_{1:t}$ is available.
\end{problem}
\section{Bayesian filtering}
\label{sec:Chap1Extra}
The Bayesian idea for solving Problems \ref{problem:Chap1P1} and \ref{problem:Chap1P2} is to construct a posterior pdf ${Z_t|(Y_{1:t}=y_{1:t})\sim p({z}_t| {y}_{1:t})}$ for all ${t\in\mathbb{N}}$. Here $p({z}_t| {y}_{1:t})$ is a probabilistic representation of available statistical information on ${Z_t\in\mathcal{Z}}$ conditioned on ${\{Y_{1:t}={y}_{1:t}\}}$. Using the Markov property of (\ref{eq:Chap1E3}) and from the Bayes' theorem, $p({z}_t| {y}_{1:t})$ can be computed as
\begin{align}
\label{eq:Chap1E8}
p({z}_t| {y}_{1:t})&=\frac{p({y}_t |{z}_t)p({z}_t| {y}_{1:t-1})}{p({y}_t | {y}_{1:t-1})},
\end{align}
where: $p({y}_t | {y}_{1:t-1})=\int_{\mathcal{Z}} p({y}_t|{z}_t)p(dz_t|{y}_{1:t-1})$ is a constant; $p(dz_t|{y}_{1:t-1})\triangleq p(z_t|{y}_{1:t-1})d{z}_t$ is a prior distribution; and $p({z}_t | {y}_{1:t-1})$ is a prior density, which can be computed as
\begin{align}
\label{eq:Chap1E7}
p({z}_t | {y}_{1:t-1})=&\int_{\mathcal{Z}}p({z}_t| {z}_{t-1})p(d{z}_{t-1}| {y}_{1:t-1}),
\end{align}
where $p(d{z}_{t-1} | {y}_{1:t-1})\triangleq p({z}_{t-1}| {y}_{1:t-1})d{z}_{t-1}$ is the posterior distribution at ${t-1}$. Ignoring the constant term,  (\ref{eq:Chap1E8}) in compact form can be written as follows
\begin{align}
\label{eq:Chap1E9}
p(z_t  | {y}_{1:t})&\propto {p({y}_t |z_t )}p(z_t | {y}_{1:t-1}).	
\end{align}
In principle, the recurrence relation between the prediction and update equations in (\ref{eq:Chap1E7}) and (\ref{eq:Chap1E9}), respectively, provides a complete Bayesian solution to Problems \ref{problem:Chap1P1} and \ref{problem:Chap1P2}. 

To compute a point estimate from $p(z_t|y_{1:t})$, a common approach is to minimize the mean-square error (MSE) risk ${\mathcal{R}_Z\triangleq\mathbb{E}_{p(Z_t,Y_{1:t})}[\|Z_t-\widehat{Z}_{t|t}\|^2_2]}$, where ${\widehat{Z}_{t|t}\in\mathbb{R}^{s}}$ is the point estimate of the states and parameters at time ${t\in\mathbb{N}}$; ${\|\cdot\|_2}$ is a $2-$norm operator; and $\mathbb{E}_{p(\cdot)}$ is the expectation with respect to the pdf $p(\cdot)$. Minimizing $\mathcal{R}_Z$ over $\widehat{Z}_{t|t}$ yields conditional mean of ${Z_t|(Y_{1:t}=y_{1:t})\sim p({z}_t| {y}_{1:t})}$ as an optimal point estimate \cite{HL1968}. For instance, if  $\mathcal{R}_\theta$ is the MSE Bayes' risk then the MMSE parameter estimate is given by
\begin{align}
\label{eq:Chap1E6}
\widehat{\theta}_{t|t}\triangleq \mathbb{E}_{p(\theta_t|Y_{1:t})}[{\theta}_t]=\int_{\Theta}{\theta}_t p(d{\theta}_t| {y}_{1:t}),
\end{align}
where $p(d{\theta_t}| {y}_{1:t})$ is the marginalized posterior distribution for the parameters, such that
\begin{equation}
\label{eq:Chap1E5}
p(d{\theta_t}| {y}_{1:t}) =\int_{\mathcal{X}}{p(dz_t|y_{1:t})}.
\end{equation}
\begin{rmk}
Except for linear systems with Gaussian state and measurement noise or when $\mathcal{Z}$ is a finite set, with finite computing capabilities, Bayesian on-line state-parameter estimation solution given in (\ref{eq:Chap1E9}) cannot be solved exactly. 
\end{rmk}
This paper proposes an SMC based adaptive SIR filter to numerically approximate the Bayesian on-line state-parameter estimation solution given in (\ref{eq:Chap1E9}).
\section{Adaptive SIR filter}
\label{sec:Chap1S4}
It is not our aim to review SMC methods in details, but simply to point out their intrinsic limitations, which have fundamental practical consequences on the ADA introduced in Section \ref{sec:Chap1S2}. The essential idea behind SMC methods is to generate a set of random particles and their associated weights from the target pdf. The target pdf of interest here is the posterior pdf $p(z_t|y_{1:t})$ in (\ref{eq:Chap1E9}). Unfortunately, due to the non-Gaussian nature of $p(z_t|y_{1:t})$, generating set of random particles from the target pdf is non-trivial \cite{R2004}. 

An alternate idea is to employ importance sampling function (ISF) $q(z_t|y_{1:t},z_{t-1})$, such that $q(z_t|y_{1:t},z_{t-1})$ is a non-negative function on $\mathcal{Z}$ and $\supp q(z_t|y_{1:t},z_{t-1})\supseteq \supp p(z_t|y_{1:t})$. A standard SIR filter selects ${q(z_t|y_{1:t},z_{t-1})=p({z}_t| {y}_{1:t-1})}$ \cite{A2002}, since it enables easy sampling from the ISF and easy evaluation of $p({z}_t| {y}_{1:t-1})$ for any ${\{Z_t,~Y_{1:t-1}\}\in\mathcal{Z}\times\mathcal{Y}^{t-1}}$. Now to generate a set of random particles from the ISF $p({z}_t|{y}_{1:t-1})$, the multi-dimensional integral in (\ref{eq:Chap1E7}) needs to be evaluated first.
Using samples from $p({z}_{t-1}|{y}_{1:t-1})$ (available from the recursive relation in (\ref{eq:Chap1E7}) and (\ref{eq:Chap1E9})), an SMC approximation of the posterior distribution ${{Z}_{t-1}|(Y_{1:t-1}={y}_{1:t-1})\sim p(dz_{t-1}|y_{1:t-1})}$ is given by
\begin{align}
\label{eq:Chap1E12}
\tilde{p}(d{z}_{t-1}|{y}_{1:t-1})=& \sum_{i=1}^{N} W^i_{t-1|t-1}\delta_{{Z}^i_{t-1|t-1} }(d{z}_{t-1}),
\end{align}
where: ${\tilde{p}(d{z}_{t-1}|{y}_{1:t-1})}$ is an SMC estimate of the joint state-parameter posterior distribution ${p(d{z}_{t-1}|{y}_{1:t-1})}$; ${\{Z^i_{t-1|t-1};~W^i_{t-1|t-1}\}_{i=1}^N\sim\tilde{p}(z_{t-1}|{y}_{1:t-1})}$ is a set of $N$ particles and their weights, distributed according to ${\tilde{p}({z}_{t-1}|{y}_{1:t-1})}$, such that ${\sum_{i=1}^{N} W^i_{t-1|t-1}=1}$ and ${\delta_{Z^i_{t-1|t-1}}(dz_{t-1})}$ is the Dirac delta mass located at the random sample ${Z^i_{t-1|t-1}}$.  

Using (\ref{eq:Chap1E12}), an SMC approximation of the marginalized posterior distribution of the states and parameters  at ${t-1}$ can also be computed as given in the next lemma.
\begin{lemma}
\label{lemma:Chap1L1}
Let the SMC approximation of the distribution of ${{Z}_{t-1}|(Y_{1:t-1}={y}_{1:t-1})}$ be given by (\ref{eq:Chap1E12}) then marginalizing (\ref{eq:Chap1E12}) over ${{X}_t \in\mathcal{X}}$ and ${{\theta}_t \in{\Theta}}$ yields approximate distributions for ${{\theta}_{t-1}|(Y_{1:t-1}={y}_{1:t-1})}$ and ${{X}_{t-1}|(Y_{1:t-1}={y}_{1:t-1})}$, respectively, such that
\begin{subequations}
\label{eq:Chap1EL}
\begin{align}
\tilde{p}(d{\theta}_{t-1} | {y}_{1:t-1})=\sum_{i=1}^{N} W^i_{t-1|t-1}\delta_{{\theta}^i_{t-1|t-1}}({d\theta}_{t-1} ),\label{eq:Chap1EL1}\\
\tilde{p}(d{x}_{t-1} | {y}_{1:t-1})=\sum_{i=1}^{N} W^i_{t-1|t-1}\delta_{{X}^i_{t-1|t-1}}(d{x}_{t-1} )\label{eq:Chap1EC1},
\end{align}
\end{subequations}
where $\tilde{p}(d{\theta}_{t-1} | {y}_{1:t-1})$ and $\tilde{p}(d{x}_{t-1} | {y}_{1:t-1})$ are the SMC approximations of the distributions $p(d{\theta}_{t-1} | {y}_{1:t-1})$ and $p(d{x}_{t-1} | {y}_{1:t-1})$, respectively.
\end{lemma}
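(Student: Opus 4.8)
The plan is to derive (\ref{eq:Chap1EL1}) and (\ref{eq:Chap1EC1}) directly from the definition of a marginal distribution together with the product structure of the extended space $\mathcal{Z}=\mathcal{X}\times\Theta$. First I would recall that, exactly as in (\ref{eq:Chap1E5}), the marginal posterior distribution of the parameters is obtained from the joint posterior by integrating out the state component, $p(d\theta_{t-1}|y_{1:t-1})=\int_{\mathcal{X}}p(dz_{t-1}|y_{1:t-1})$, and, symmetrically, $p(dx_{t-1}|y_{1:t-1})=\int_{\Theta}p(dz_{t-1}|y_{1:t-1})$. The approximations $\tilde p(d\theta_{t-1}|y_{1:t-1})$ and $\tilde p(dx_{t-1}|y_{1:t-1})$ are then \emph{defined} by applying these same marginalization operators to the particle approximation $\tilde p(dz_{t-1}|y_{1:t-1})$ of (\ref{eq:Chap1E12}), so the lemma amounts to evaluating those integrals in closed form.

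Next I would substitute (\ref{eq:Chap1E12}) and use the fact that the sum over $i=1,\dots,N$ is finite, so the integral and the sum commute (no convergence issue arises, and the weights $W^i_{t-1|t-1}$ are constants that factor out of the integral). It then remains to compute $\int_{\mathcal{X}}\delta_{Z^i_{t-1|t-1}}(dz_{t-1})$ for a single particle. Writing $Z^i_{t-1|t-1}=\{X^i_{t-1|t-1},\theta^i_{t-1|t-1}\}$ and using the product decomposition of the Dirac mass on $\mathcal{X}\times\Theta$, namely $\delta_{Z^i_{t-1|t-1}}(dz_{t-1})=\delta_{X^i_{t-1|t-1}}(dx_{t-1})\,\delta_{\theta^i_{t-1|t-1}}(d\theta_{t-1})$, together with the fact that a Dirac measure has unit total mass, $\int_{\mathcal{X}}\delta_{X^i_{t-1|t-1}}(dx_{t-1})=1$, one obtains $\int_{\mathcal{X}}\delta_{Z^i_{t-1|t-1}}(dz_{t-1})=\delta_{\theta^i_{t-1|t-1}}(d\theta_{t-1})$. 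Summing against the weights yields (\ref{eq:Chap1EL1}); repeating the argument with the roles of $\mathcal{X}$ and $\Theta$ interchanged yields (\ref{eq:Chap1EC1}). As a sanity check I would note that $\sum_{i=1}^{N} W^i_{t-1|t-1}=1$ is preserved, so both right-hand sides are again genuine random probability measures, and that $\{\theta^i_{t-1|t-1}\}_{i=1}^N$ and $\{X^i_{t-1|t-1}\}_{i=1}^N$ are simply the corresponding sub-vectors of the joint particles $\{Z^i_{t-1|t-1}\}_{i=1}^N$.

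There is no serious obstacle here: the statement is essentially bookkeeping, and the only point needing a word of care is the legitimacy of factoring the Dirac mass on the product space $\mathcal{Z}=\mathcal{X}\times\Theta$ and identifying its marginals — which is immediate since $\delta_{(a,b)}=\delta_a\otimes\delta_b$ — together with the (trivial) interchange of a finite sum and an integral. If one wanted a fully rigorous argument one could instead test both sides against an arbitrary bounded measurable function $\varphi(\theta_{t-1})$ (resp. $\varphi(x_{t-1})$) and verify $\int \varphi(\theta_{t-1})\,\tilde p(d\theta_{t-1}|y_{1:t-1})=\sum_{i=1}^{N} W^i_{t-1|t-1}\varphi(\theta^i_{t-1|t-1})$, which is precisely the defining property of the claimed mixture of Dirac masses; the analogous identity for $x_{t-1}$ closes the proof.
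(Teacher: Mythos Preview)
Your proposal is correct and follows essentially the same approach as the paper: apply the marginalization identity $p(d\theta_{t-1}|y_{1:t-1})=\int_{\mathcal{X}}p(dz_{t-1}|y_{1:t-1})$, substitute the particle approximation (\ref{eq:Chap1E12}), pull the finite sum and weights outside the integral, and reduce $\int_{\mathcal{X}}\delta_{Z^i_{t-1|t-1}}(dz_{t-1})$ to $\delta_{\theta^i_{t-1|t-1}}(d\theta_{t-1})$. Your write-up is somewhat more explicit about the product structure $\delta_{(a,b)}=\delta_a\otimes\delta_b$ and the symmetric derivation of (\ref{eq:Chap1EC1}), but the argument is the same.
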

\begin{proof}
Using the Law of Total Probability on posterior distribution $p(d{z}_{t-1}|{y}_{1:t-1})$ yields
\begin{align}
\label{eq:Chap1E13}
p({d\theta}_{t-1} |y_{1:t-1})&=\int_\mathcal{X} p(d{z}_{t-1}| {y}_{1:t-1}).
\end{align}
Substituting (\ref{eq:Chap1E12}) into (\ref{eq:Chap1E13}) and taking independent terms outside the integral yields
\begin{subequations}
\begin{align}
\tilde{p}(d{\theta}_{t-1} | {y}_{1:t-1})=&\sum_{i=1}^{N} W^i_{t-1|t-1}\int_\mathcal{X}\delta_{{Z}^i_{t-1|t-1}}(d{z}_{t-1}),\\
=&\sum_{i=1}^{N} W^i_{t-1|t-1}\delta_{{\theta}^i_{t-1|t-1}}(d{\theta}_{t-1}).\label{eq:Chap1E14}
\end{align}
\end{subequations}
The equality in (\ref{eq:Chap1E14}) is a result from marginalization of the joint state-parameter Dirac delta function over $\mathcal{X}$, which completes the proof. 
\end{proof}
Lemma \ref{lemma:Chap1L1} computes the marginal distributions of ${{\theta}_{t-1}|(Y_{1:t-1}={y}_{1:t-1})}$ and ${{X}_{t-1}|(Y_{1:t-1}={y}_{1:t-1})}$ using (\ref{eq:Chap1E12}). Note that the weights in (\ref{eq:Chap1EL}) are same as that in (\ref{eq:Chap1E12}).
\begin{rmk}
\label{remark:Chap1R2}
From (\ref{eq:Chap1EL1}), the mean and the covariance of ${\theta_{t-1}|(Y_{1:t-1}={y}_{1:t-1})}$ can be approximated as ${\mathbb{E}_{p(\theta_{t-1}|Y_{1:t-1})}\left[{\theta}_{t-1}\right]\triangleq\int_{\Theta}\theta_{t-1}p(d\theta_{t-1}|y_{1:t-1})\approx\sum_{i=1}^{N} W^i_{t-1|t-1}{\theta}^i_{t-1|t-1}}$ $ {=\widehat{{\theta}}_{t-1|t-1}}$ and ${\mathbb{V}_{p(\theta_{t-1}|Y_{1:t-1})}\left[{\theta}_{t-1}\right]\triangleq}$ $\int_{\Theta}(\theta_{t-1}-\widehat{\theta}_{t-1|t-1})(\theta_{t-1}-\widehat{\theta}_{t-1|t-1})^T p(d\theta_{t-1}|y_{1:t-1})$ $\approx\sum_{i=1}^{N} W^i_{t-1|t-1}({\theta}^i_{t-1|t-1}-\widehat{\theta}_{t-1|t-1})({\theta}^i_{t-1|t-1}-\widehat{\theta}_{t-1|t-1})^T$ $=V_{\theta_{t-1}}$, respectively.
\end{rmk}
In Remark \ref{remark:Chap1R2}, ${\widehat{\theta}_{t-1|t-1}\in\mathbb{R}^r}$ is an MMSE parameter estimate at $t-1$. Similarly, an MMSE state estimate ${\widehat{X}_{t-1|t-1}\in\mathbb{R}^n}$ at $t-1$ can also be computed using (\ref{eq:Chap1EC1}). Finally, to generate a set of random particles from the ISF, substituting (\ref{eq:Chap1E12}) into (\ref{eq:Chap1E7}) yields
\begin{subequations}
\begin{align}
\tilde{p}(z_t | {y}_{1:t-1})=& \int_{\mathcal{Z}}p({z}_t |{z}_{t-1})\sum_{i=1}^{N} W^i_{t-1|t-1}\delta_{{Z}^i_{t-1|t-1}}(d{z}_{t-1} ),\\
=&\sum_{i=1}^{N} W^i_{t-1|t-1}p({z}_t | {Z}^i_{t-1|t-1}),\label{eq:Chap1E16}
\end{align}
\end{subequations}
where $\tilde{p}(z_t | {y}_{1:t-1})$ is an SMC approximation of the ISF $p(z_t | {y}_{1:t-1})$. The approximation in (\ref{eq:Chap1E16}) is a mixture of $N$ transitional pdfs, with a mixing ratio $\{W^i_{t-1|t-1}\}_{i=1}^N$ and centred at $\{{Z}^i_{t-1|t-1}\}_{i=1}^N$. Marginalization of the ISF $p(z_t | {y}_{1:t-1})$ over ${X}_t \in{\mathcal{X}}$ is discussed in next.
\begin{lemma}
\label{lemma:Chap1L2}
Let ${\xi_t\in\mathbb{R}^r}$ in (\ref{eq:Chap1E3b}) be a sequence of independent Gaussian variable, such that ${\xi_t\sim\mathcal{N}(\xi_t|0,\Sigma_{\theta_t})}$, where ${\Sigma_{\theta_t}\in\mathcal{S}_+^r}$ for all ${t\in\mathbb{N}}$ then marginalizing (\ref{eq:Chap1E16}) over ${X_t \in\mathcal{X}}$ yields a mixture Gaussian pdf for ${\theta_t |(Y_{1:t-1}=y_{1:t-1})}$ given by
\begin{align}
\label{eq:Chap1EE16}
\tilde{p}({\theta}_t | {y}_{1:t-1})=\sum_{i=1}^{N}W^i_{t-1|t-1}\mathcal{N}(\theta_t|{\theta}^i_{t-1|t-1}, \Sigma_{\theta_t}),
\end{align}
where ${\theta_t|\theta^i_{t-1|t-1}\sim \mathcal{N}(\theta_t|\theta^i_{t-1|t-1},\Sigma_{\theta_t})}$ follws a Gaussian density with mean ${\theta^i_{t-1|t-1}\in\mathbb{R}^r}$ and covariance  ${\Sigma_{\theta_t}\in\mathcal{S}_+^r}$.
\end{lemma}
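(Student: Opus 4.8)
The plan is to marginalize the finite-mixture representation (\ref{eq:Chap1E16}) of the importance sampling function termwise over the state coordinate, and then to recognize each surviving factor as a Gaussian kernel produced by the artificial-dynamics equation (\ref{eq:Chap1E2}). First I would write $z_t=\{x_t,\theta_t\}$ and apply the Law of Total Probability, $\tilde p(d\theta_t\mid y_{1:t-1})=\int_{\mathcal X}\tilde p(dz_t\mid y_{1:t-1})$; substituting (\ref{eq:Chap1E16}) and interchanging the finite sum with the integral gives $\tilde p(\theta_t\mid y_{1:t-1})=\sum_{i=1}^N W^i_{t-1|t-1}\int_{\mathcal X} p(z_t\mid Z^i_{t-1|t-1})\,dx_t$.

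The crux of the argument is to factor the one-step transition density $p(z_t\mid Z^i_{t-1|t-1})$ as a state part times a parameter part. Here I would appeal to the structure of the extended model: by (\ref{eq:Chap1E1a}) the state update $X_t=f_{t-1}(X_{t-1},u_{t-1},\theta_{t-1},V_{t-1})$ is driven by $V_{t-1}$, while by (\ref{eq:Chap1E2}) the parameter update $\theta_t=\theta_{t-1}+\xi_{t-1}$ is driven by $\xi_{t-1}$; by Assumption \ref{assumption:Chap1A1} and the construction of the artificial noise, $V_{t-1}$ and $\xi_{t-1}$ are mutually independent and independent of $Z_{t-1}$. Hence, conditionally on $Z_{t-1}$, $X_t$ and $\theta_t$ are conditionally independent, so that $p(z_t\mid Z^i_{t-1|t-1})=p(x_t\mid Z^i_{t-1|t-1})\,p(\theta_t\mid\theta^i_{t-1|t-1})$, the second factor depending on $Z^i_{t-1|t-1}$ only through $\theta^i_{t-1|t-1}$ because of (\ref{eq:Chap1E2}). (Equivalently, this factorization is already displayed in (\ref{eq:Chap1E3a})--(\ref{eq:Chap1E3b}).)

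With the factorization in hand the marginalization is immediate: pulling the $\theta_t$-factor out of the $x_t$-integral and using that $p(\,\cdot\mid Z^i_{t-1|t-1})$ integrates to one over $\mathcal X$ gives $\int_{\mathcal X} p(z_t\mid Z^i_{t-1|t-1})\,dx_t=p(\theta_t\mid\theta^i_{t-1|t-1})$. Finally, (\ref{eq:Chap1E2}) with $\xi_{t-1}\sim\mathcal N(\,\cdot\mid 0,\Sigma_{\theta_t})$ identifies $p(\theta_t\mid\theta^i_{t-1|t-1})$ as the shifted Gaussian density $\mathcal N(\theta_t\mid\theta^i_{t-1|t-1},\Sigma_{\theta_t})$, and substituting back yields the claimed mixture $\tilde p(\theta_t\mid y_{1:t-1})=\sum_{i=1}^N W^i_{t-1|t-1}\,\mathcal N(\theta_t\mid\theta^i_{t-1|t-1},\Sigma_{\theta_t})$.

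I expect the only genuinely delicate point to be the justification of the product form of the extended-state transition kernel, i.e.\ making precise that it factors into a state part and a parameter part; this rests entirely on Assumption \ref{assumption:Chap1A1} and the independence of $\xi_t$ from the process noise. The remaining manipulations --- interchanging a finite sum with an integral, integrating a probability density to one, and recognizing a location shift of a zero-mean Gaussian --- are routine, and the covariance is written $\Sigma_{\theta_t}$ throughout in agreement with the statement.
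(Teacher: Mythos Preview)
Your proof is correct and follows essentially the same route as the paper: marginalize (\ref{eq:Chap1E16}) over $\mathcal X$ via the Law of Total Probability, factor the joint transition kernel as $p(z_t\mid Z^i_{t-1|t-1})=p(x_t\mid Z^i_{t-1|t-1})\,p(\theta_t\mid\theta^i_{t-1|t-1})$, integrate out the state part to one, and identify the parameter part as the Gaussian density induced by (\ref{eq:Chap1E2}). If anything, your justification of the factorization via conditional independence of $V_{t-1}$ and $\xi_{t-1}$ is more explicit than the paper's, which simply asserts the product form in (\ref{eq:Chap1E18}).
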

\begin{proof}
Using the Law of Total Probability on the ISF $p({z}_{t} | {y}_{1:t-1})$ yields
\begin{align}
\label{eq:Chap1E17}
p({\theta}_t | {y}_{1:t-1})=\int_\mathcal{X} p({z}_{t}| {y}_{1:t-1})d{x}_t.
\end{align}
Substituting (\ref{eq:Chap1E16}) into (\ref{eq:Chap1E17}) and pulling independent terms out of the integral yields
\begin{subequations}
\begin{align}
\tilde{p}({\theta}_t | {y}_{1:t-1})=&\sum_{i=1}^{N} W^i_{t-1|t-1}\int_\mathcal{X}p({z}_t | {Z}^i_{t-1|t-1})dx_t,\\
=&\sum_{i=1}^{N}W^i_{t-1|t-1}p({\theta}_t|{\theta}^i_{t-1|t-1})\int_\mathcal{X} p({x}_t |{Z}^i_{t-1|t-1})d{x}_t,\label{eq:Chap1E18}
\end{align}
\end{subequations}
where ${\tilde{p}({\theta}_t | {y}_{1:t-1})}$ is an estimate. Since, $\int_\mathcal{X} p(d{x}_t | {Z}^i_{t-1|t-1})=1$, (\ref{eq:Chap1E18}) simplifies to
\begin{subequations}
\begin{align}
\tilde{p}({\theta}_t | {y}_{1:t-1})&=\sum_{i=1}^{N}W^i_{t-1|t-1}p({\theta}_t|{\theta}^i_{t-1|t-1}),\\
&=\sum_{i=1}^{N}W^i_{t-1|t-1}\mathcal{N}(\theta_t|{\theta}^i_{t-1|t-1}, \Sigma_{\theta_t}).\label{eq:Chap1E19}
\end{align}
\end{subequations}
The equality in (\ref{eq:Chap1E19}) follows from the fact that the pdf $p({\theta}_t|{\theta}^i_{t-1|t-1})$ models the noise distribution ${\xi}_{t}\sim\mathcal{N}(\xi_t|0,\Sigma_{\theta_t})$ (see (\ref{eq:Chap1E3b})). 
\end{proof}
\cite{LW2001,W1993} refer to (\ref{eq:Chap1EE16}) as Gaussian kernel estimate of the marginalized ISF, whose kernel width is controlled by the noise covariance $\Sigma_{\theta_t}$.  Statistics of (\ref{eq:Chap1EE16}) are given next to highlight the implications of using SMC methods with ADA.
\begin{lemma}
\label{lemma:Chap1L3}
Let the artificial noise in (\ref{eq:Chap1E3b}) be ${\xi_t\sim\mathcal{N}(\xi_t|0,\Sigma_{\theta_t})}$ and let ${\widehat{\theta}_{t-1|t-1}\in\mathbb{R}^r}$ and ${V_{\theta_{t-1}}\in\mathcal{S}_+^r}$ be the mean and covariance of ${\theta_{t-1}|(Y_{1:t-1}=y_{1:t-1})\sim\tilde{p}(\theta_{t-1}|y_{1:t-1})}$ as computed in Remark \ref{remark:Chap1R2}. Also, let the SMC approximation of the marginalized ISF be given by (\ref{eq:Chap1EE16}), such that ${\theta_t|(Y_{1:t-1}=y_{1:t-1})\sim\tilde{p}(\theta_t|y_{1:t-1})}$ then the first and second moment of ${\theta_t|(Y_{1:t-1}=y_{1:t-1})}$ is given by
\begin{subequations}
\begin{align}
\mathbb{E}_{p(\theta_t|Y_{1:t-1})}[\theta_t]&=\widehat{\theta}_{t-1|t-1},\label{eq:Chap1EE17}\\
\mathbb{V}_{p(\theta_t|Y_{1:t-1})}[\theta_t]&= V_{\theta_{t-1}}+\Sigma_{\theta_t} \label{eq:Chap1EE18}.
\end{align}
\end{subequations}
\end{lemma}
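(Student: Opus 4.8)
The plan is to treat $\tilde p(\theta_t\mid y_{1:t-1})$ in (\ref{eq:Chap1EE16}) purely as a finite Gaussian mixture and to compute its first two moments by linearity, invoking Remark~\ref{remark:Chap1R2} to recognize the resulting weighted sums as $\widehat\theta_{t-1|t-1}$ and $V_{\theta_{t-1}}$. This is the standard ``mixture mean / mixture variance'' computation, so the work is essentially bookkeeping.

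First, for the mean I would integrate $\theta_t$ against (\ref{eq:Chap1EE16}), pull the finite sum and the weights $W^i_{t-1|t-1}$ outside the integral, and use that each component $\mathcal{N}(\theta_t\mid\theta^i_{t-1|t-1},\Sigma_{\theta_t})$ integrates $\theta_t$ to its mean $\theta^i_{t-1|t-1}$. Since $\sum_{i=1}^{N} W^i_{t-1|t-1}=1$, what remains is $\sum_{i=1}^{N} W^i_{t-1|t-1}\theta^i_{t-1|t-1}$, which is exactly the SMC mean $\widehat\theta_{t-1|t-1}$ recorded in Remark~\ref{remark:Chap1R2}; this establishes (\ref{eq:Chap1EE17}).

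Next, for the covariance I would start from $\mathbb{V}[\theta_t]=\mathbb{E}[(\theta_t-\widehat\theta_{t-1|t-1})(\theta_t-\widehat\theta_{t-1|t-1})^T]$, expand the mixture as before, and inside the $i$-th integral write $\theta_t-\widehat\theta_{t-1|t-1}=(\theta_t-\theta^i_{t-1|t-1})+(\theta^i_{t-1|t-1}-\widehat\theta_{t-1|t-1})$. Expanding the outer product gives four terms: the two cross terms vanish because $\mathbb{E}[\theta_t-\theta^i_{t-1|t-1}]=0$ under the $i$-th Gaussian, the term $(\theta_t-\theta^i_{t-1|t-1})(\theta_t-\theta^i_{t-1|t-1})^T$ integrates to $\Sigma_{\theta_t}$ by definition of the component covariance, and the remaining term $(\theta^i_{t-1|t-1}-\widehat\theta_{t-1|t-1})(\theta^i_{t-1|t-1}-\widehat\theta_{t-1|t-1})^T$ is deterministic. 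Summing over $i$ with the weights and again using $\sum_{i=1}^{N}W^i_{t-1|t-1}=1$ yields $\Sigma_{\theta_t}+\sum_{i=1}^{N}W^i_{t-1|t-1}(\theta^i_{t-1|t-1}-\widehat\theta_{t-1|t-1})(\theta^i_{t-1|t-1}-\widehat\theta_{t-1|t-1})^T$, and the sum is precisely $V_{\theta_{t-1}}$ from Remark~\ref{remark:Chap1R2}, giving (\ref{eq:Chap1EE18}).

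There is no genuine obstacle here; the only point needing a little care is the vanishing of the cross terms in the quadratic expansion and matching the leftover sum of outer products to the definition of $V_{\theta_{t-1}}$ in Remark~\ref{remark:Chap1R2}. As an alternative one could simply cite the law of total variance, conditioning on the mixture index $i$ (which has distribution $\{W^i_{t-1|t-1}\}_{i=1}^{N}$): the ``expected conditional variance'' contributes $\Sigma_{\theta_t}$ and the ``variance of the conditional mean'' contributes $V_{\theta_{t-1}}$, bypassing the explicit expansion entirely.
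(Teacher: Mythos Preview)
Your proposal is correct and follows essentially the same route as the paper: the paper also integrates against the mixture (\ref{eq:Chap1EE16}) to obtain the mean, and for the covariance it performs exactly your add-and-subtract decomposition $\theta_t-\widehat\theta_{t-1|t-1}=(\theta_t-\theta^i_{t-1|t-1})+(\theta^i_{t-1|t-1}-\widehat\theta_{t-1|t-1})$, labels the four resulting integrals $I_1,\dots,I_4$, and shows $I_1=\Sigma_{\theta_t}$, $I_2=V_{\theta_{t-1}}$, $I_3=I_4=0$. Your law-of-total-variance remark is a clean shortcut the paper does not mention, but the core argument is identical.
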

\begin{proof}
Expectation of ${\theta_t|(Y_{1:t-1}=y_{1:t-1})}$ is given by
\begin{align}
\label{eq:Chap1E20}
\mathbb{E}_{p(\theta_t|Y_{1:t-1})}[\theta_t]=\int_{\Theta}\theta_t p(d\theta_t|y_{1:t-1}).
\end{align}
Substituting (\ref{eq:Chap1E19}) into (\ref{eq:Chap1E20}) yields
\begin{subequations}
\begin{align}
\mathbb{E}_{p(\theta_t|Y_{1:t-1})}[\theta_t]=&\int_{\Theta}\theta_t \sum_{i=1}^{N} W_{t-1|t-1}^i\mathcal{N}(d\theta_t|\theta^i_{t-1|t-1},\Sigma_{\theta_t}),\\
=&\sum_{i=1}^{N} W_{t-1|t-1}^i\int_{\Theta}\theta_t \mathcal{N}(d\theta_t|\theta^i_{t-1|t-1},\Sigma_{\theta_t}),\\
=&\sum_{i=1}^{N} W_{t-1|t-1}^i\theta^i_{t-1|t-1}=\widehat{\theta}_{t-1|t-1},\label{eq:Chap1E21}
\end{align}
\end{subequations}
where (\ref{eq:Chap1E21}) is from Remark \ref{remark:Chap1R2}, which completes the proof for (\ref{eq:Chap1EE17}). Now the covariance of ${\theta_t|(Y_{1:t-1}=y_{1:t-1})}$ is given by
\begin{align}
\label{eq:Chap1E22}
\mathbb{V}_{p(\theta_t|Y_{1:t-1})}[\theta_t]=\int_{\Theta}(\theta_t-\mathbb{E}_{p(\theta_t|Y_{1:t-1})}[\theta_t])(\theta_t-\mathbb{E}_{p(\theta_t|Y_{1:t-1})}[\theta_t])^T p(d\theta_t|y_{1:t-1}).
\end{align}
Substituting (\ref{eq:Chap1E19}) and (\ref{eq:Chap1E21}) into (\ref{eq:Chap1E22}) yields
\begin{align}
\label{eq:Chap1E23}
\mathbb{V}_{p(\theta_t|Y_{1:t-1})}[\theta_t]=\sum_{i=1}^{N}W^i_{t-1|t-1}\int_{\Theta}(\theta_t-\widehat{\theta}_{t-1|t-1})(\theta_t-\widehat{\theta}_{t-1|t-1})^T\mathcal{N}(d\theta_t|\theta^i_{t-1|t-1},\Sigma_{\theta_t}).
\end{align}
Simple algebraic manipulation of (\ref{eq:Chap1E23}) yields
\begin{align}
\label{eq:Chap1E24}
\mathbb{V}_{p(\theta_t|Y_{1:t-1})}[\theta_t]= &\sum_{i=1}^{N}W^i_{t-1|t-1}\int_{\Theta}(\theta_t-\theta^{i}_{t-1|t-1}+\theta^{i}_{t-1|t-1}-\widehat{\theta}_{t-1|t-1})\nonumber\\
&\times(\theta_t-\theta^{i}_{t-1|t-1}+\theta^{i}_{t-1|t-1}-\widehat{\theta}_{t-1|t-1})^T\mathcal{N}(d\theta_t|\theta^i_{t-1|t-1},\Sigma_{\theta_t}).
\end{align}
Simplifying the terms in (\ref{eq:Chap1E24}) and representing the integral solution as
\begin{align}
\label{eq:Chap1E25}
\mathbb{V}_{p(\theta_t|Y_{1:t-1})}[\theta_t]= &I_1+I_2+I_3+I_4,
\end{align}
where:
\begin{subequations}
\begin{flalign}
&I_1=\sum_{i=1}^{N}W^i_{t-1|t-1}\int_{\Theta}(\theta_t-\theta^{i}_{t-1|t-1})(\theta_t-\theta^{i}_{t-1|t-1})^T\mathcal{N}(d\theta_t|\theta^i_{t-1|t-1},\Sigma_{\theta_t})=\sum_{i=1}^{N}W^i_{t-1|t-1}\Sigma_{\theta_t}&\nonumber\\
&=\Sigma_{\theta_t};&\label{eq:Chap1E26}
\end{flalign}
\begin{flalign}
&I_2=\sum_{i=1}^{N}W^i_{t-1|t-1}\int_{\Theta}(\theta^i_{t-1|t-1}-\widehat{\theta}_{t-1|t-1})(\theta^i_{t-1|t-1}-\widehat{\theta}_{t-1|t-1})^T\mathcal{N}(d\theta_t|\theta^i_{t-1|t-1},\Sigma_{\theta_t})=&\nonumber\\
&\sum_{i=1}^{N}W^i_{t-1|t-1}(\theta^i_{t-1|t-1}-\widehat{\theta}_{t-1|t-1})(\theta^i_{t-1|t-1}-\widehat{\theta}_{t-1|t-1})^T\int_{\Theta}\mathcal{N}(d\theta_t|\theta^i_{t-1|t-1},\Sigma_{\theta_t})=V_{\theta_{t-1}};&\label{eq:Chap1E27}
\end{flalign}
\begin{flalign}
&I_3=\sum_{i=1}^{N}W^i_{t-1|t-1}\int_{\Theta}(\theta_t-\theta^{i}_{t-1|t-1})(\theta^i_{t-1|t-1}-\widehat{\theta}_{t-1|t-1})^T\mathcal{N}(d\theta_t|\theta^i_{t-1|t-1},\Sigma_{\theta_t})=&\nonumber\\
&\sum_{i=1}^{N}W^i_{t-1|t-1}\int_{\Theta}(\theta_t-\theta^{i}_{t-1|t-1}) \mathcal{N}(d\theta_t|\theta^i_{t-1|t-1},\Sigma_{\theta_t})(\theta^i_{t-1|t-1}-\widehat{\theta}_{t-1|t-1})^T=0;&\label{eq:Chap1E28}
\end{flalign}
\begin{flalign}
&I_4=\sum_{i=1}^{N}W^i_{t-1|t-1}\int_{\Theta}(\theta^i_{t-1|t-1}-\widehat{\theta}_{t-1|t-1})(\theta_t-\theta^{i}_{t-1|t-1})^T\mathcal{N}(d\theta_t|\theta^i_{t-1|t-1},\Sigma_{\theta_t})=&\nonumber\\
&\sum_{i=1}^{N}W^i_{t-1|t-1}(\theta^i_{t-1|t-1}-\widehat{\theta}_{t-1|t-1})\int_{\Theta}(\theta_t-\theta^{i}_{t-1|t-1})^T\mathcal{N}(d\theta_t|\theta^i_{t-1|t-1},\Sigma_{\theta_t})=0.&\label{eq:Chap1E29}
\end{flalign}
\end{subequations}
Here (\ref{eq:Chap1E26}) and (\ref{eq:Chap1E27}) are based on Remark \ref{remark:Chap1R2}, and (\ref{eq:Chap1E28}) and (\ref{eq:Chap1E29}) use the relation $\int_{\Theta}\theta_t\mathcal{N}(\theta_t|\theta^i_{t-1|t-1},\Sigma_{\theta_t}) d\theta_t=\theta^i_{t-1|t-1}$. Finally, substituting (\ref{eq:Chap1E26}), (\ref{eq:Chap1E27}), (\ref{eq:Chap1E28}) and (\ref{eq:Chap1E29}) into (\ref{eq:Chap1E25}) yields (\ref{eq:Chap1EE18}), which completes the proof.
\end{proof}
\begin{rmk}
\label{remark:Chap1RE3}
From Remark \ref{remark:Chap1R2} and Lemma \ref{lemma:Chap1L3}, while computing $\tilde{p}(\theta_t|y_{1:t-1})$ from $\tilde{p}(\theta_{t-1}|y_{1:t-1})$, the mean is unchanged, i.e., $\mathbb{E}_{p(\theta_{t-1}|Y_{1:t-1})}[\theta_{t-1}]=\mathbb{E}_{p(\theta_{t}|Y_{1:t-1})}[\theta_{t}]$, while the covariance disperses by $\Sigma_{\theta_t}$, such that $\mathbb{V}_{p(\theta_{t}|Y_{1:t-1})}[\theta_{t}]-\mathbb{V}_{p(\theta_{t-1}|Y_{1:t-1})}[\theta_{t-1}]=\Sigma_{\theta_t}$.
\end{rmk}
Remark \ref{remark:Chap1RE3} highlights the variance inflation problem associated with the ADA. In \cite{LW2001}, the authors implied similar results. Note that the results presented here are important, since they are the key aspects underlying the Ad-SIR filter proposed here.
\subsection{Kernel smoothing}
\label{sec:Chap1S4.2}
It is well known that using particles sampled from an over-dispersed ISF will yield a poor approximation of the posterior pdf \cite{LW2001}. From Remark \ref{remark:Chap1RE3}, it is clear that the SMC approximation of the  marginalized ISF in (\ref{eq:Chap1EE16}) suffers from a similar dispersion problem. To overcome the issue of dispersion, use of a kernel method is proposed. The idea behind this approach is the shrinkage of the kernel width according to
\begin{align}
\label{eq:Chap1E31}
{\tilde{\theta}}^i_{t-1|t-1}&= \sqrt{1-h_{t}^2}~{\theta}^i_{t-1|t-1} +\big(1-\sqrt{1-h_{t}^2}\big)~\widehat{\theta}_{t-1|t-1},
\end{align}
where ${\{\tilde{\theta}^i_{t-1|t-1}\}_{i=1}^N}$ are the shrinkage locations and ${h_{t} \in[0,1]}$ is a kernel parameter. Therefore replacing ${\{{\theta}^i_{t-1|t-1}\}_{i=1}^N}$ with ${\{{\tilde{\theta}}^i_{t-1|t-1}\}_{i=1}^N}$ in (\ref{eq:Chap1EE16}) and setting ${\Sigma_{\theta_t}=h_{t}^2V_{\theta_{t-1}}}$, the SMC approximation of the marginalized ISF in (\ref{eq:Chap1EE16}) can now be represented as
\begin{align}
\label{eq:Chap1E32}
\tilde{p}({\theta}_t | {y}_{1:t-1})=&\sum_{i=1}^{N}W^i_{t-1|t-1}\mathcal{N}(\theta_t|{\tilde{\theta}}^i_{t-1|t-1}, h_{t}^2V_{\theta_{t-1}}).
\end{align}
Note that by setting ${\Sigma_{\theta_t}=h_{t}^2V_{\theta_{t-1}}}$, the kernel width ${\Sigma_{\theta_t}}$ becomes a non-linear function of the kernel parameter $h_t$. Tuning of $h_t$ is discussed in Section \ref{sec:Chap1S4.3}, but first the statistics of (\ref{eq:Chap1E32}) as a plausible SMC approximation of the marginalized ISF are discussed next.
\begin{corollary}
\label{corollary:Chap1C2}
Let the SMC approximation of $p(\theta_t|y_{1:t-1})$  with kernel smoothing be represented by (\ref{eq:Chap1E32}) then the first two moments of ${\theta_t|(Y_{1:t-1}=y_{1:t-1})\sim \tilde{p}(\theta_t|y_{1:t-1})}$ are given by ${\mathbb{E}_{p(\theta_t|(Y_{1:t-1})}[{\theta}_{t}]=\widehat{\theta}_{t-1|t-1}}$ and ${\mathbb{V}_{p(\theta_t|(Y_{1:t-1})}[{\theta}_{t}]=V_{\theta_{t-1}}}$, respectively.
\end{corollary}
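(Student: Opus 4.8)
The plan is to compute the first two moments of $\theta_t \mid (Y_{1:t-1} = y_{1:t-1})$ directly from the mixture representation in (\ref{eq:Chap1E32}), mirroring the argument used in Lemma \ref{lemma:Chap1L3} but now tracking the effect of the shrinkage map (\ref{eq:Chap1E31}) and the substitution $\Sigma_{\theta_t} = h_t^2 V_{\theta_{t-1}}$. First I would record two elementary facts: from (\ref{eq:Chap1E31}), the shrinkage is affine in $\theta^i_{t-1|t-1}$, so the weighted mean of the shrinkage locations satisfies $\sum_{i=1}^N W^i_{t-1|t-1}\tilde{\theta}^i_{t-1|t-1} = \sqrt{1-h_t^2}\,\widehat{\theta}_{t-1|t-1} + (1-\sqrt{1-h_t^2})\,\widehat{\theta}_{t-1|t-1} = \widehat{\theta}_{t-1|t-1}$, using $\sum_i W^i_{t-1|t-1}\theta^i_{t-1|t-1} = \widehat{\theta}_{t-1|t-1}$ from Remark \ref{remark:Chap1R2}; and second, that each Gaussian component $\mathcal{N}(\theta_t \mid \tilde{\theta}^i_{t-1|t-1}, h_t^2 V_{\theta_{t-1}})$ has mean $\tilde{\theta}^i_{t-1|t-1}$ and covariance $h_t^2 V_{\theta_{t-1}}$.

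For the mean, I would substitute (\ref{eq:Chap1E32}) into $\mathbb{E}_{p(\theta_t|Y_{1:t-1})}[\theta_t] = \int_\Theta \theta_t\, p(d\theta_t|y_{1:t-1})$, interchange sum and integral, evaluate each component integral to $\tilde{\theta}^i_{t-1|t-1}$, and then apply the first elementary fact above to collapse the weighted sum to $\widehat{\theta}_{t-1|t-1}$. For the covariance, I would reuse the decomposition device of Lemma \ref{lemma:Chap1L3}: write $\theta_t - \widehat{\theta}_{t-1|t-1} = (\theta_t - \tilde{\theta}^i_{t-1|t-1}) + (\tilde{\theta}^i_{t-1|t-1} - \widehat{\theta}_{t-1|t-1})$ inside each component, expand the outer product into four terms $I_1,\dots,I_4$, and evaluate: the cross terms $I_3, I_4$ vanish because $\int_\Theta (\theta_t - \tilde{\theta}^i_{t-1|t-1})\,\mathcal{N}(d\theta_t \mid \tilde{\theta}^i_{t-1|t-1}, h_t^2 V_{\theta_{t-1}}) = 0$; the term $I_1 = \sum_i W^i_{t-1|t-1}\, h_t^2 V_{\theta_{t-1}} = h_t^2 V_{\theta_{t-1}}$; and $I_2 = \sum_i W^i_{t-1|t-1}(\tilde{\theta}^i_{t-1|t-1} - \widehat{\theta}_{t-1|t-1})(\tilde{\theta}^i_{t-1|t-1} - \widehat{\theta}_{t-1|t-1})^T$.

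The key step — and the only place where the shrinkage geometry genuinely enters — is evaluating $I_2$. Using (\ref{eq:Chap1E31}), $\tilde{\theta}^i_{t-1|t-1} - \widehat{\theta}_{t-1|t-1} = \sqrt{1-h_t^2}\,(\theta^i_{t-1|t-1} - \widehat{\theta}_{t-1|t-1})$, so $I_2 = (1-h_t^2)\sum_i W^i_{t-1|t-1}(\theta^i_{t-1|t-1} - \widehat{\theta}_{t-1|t-1})(\theta^i_{t-1|t-1} - \widehat{\theta}_{t-1|t-1})^T = (1-h_t^2) V_{\theta_{t-1}}$ by Remark \ref{remark:Chap1R2}. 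Adding $I_1 + I_2 = h_t^2 V_{\theta_{t-1}} + (1-h_t^2) V_{\theta_{t-1}} = V_{\theta_{t-1}}$ gives the claimed covariance, and this is exactly the cancellation that shows kernel smoothing fixes the variance inflation of Remark \ref{remark:Chap1RE3}. I do not anticipate a real obstacle here; the main thing to get right is the bookkeeping that the shrinkage factor $\sqrt{1-h_t^2}$ is chosen precisely so that $(1-h_t^2)$ from the recentering and $h_t^2$ from the injected Gaussian noise sum to one, so no approximation or assumption beyond those already in force (Gaussian $\xi_t$, Remark \ref{remark:Chap1R2}) is needed.
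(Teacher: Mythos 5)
Your proposal is correct and follows essentially the same route as the paper, whose proof simply states that the result follows from substituting (\ref{eq:Chap1E31}) and $\Sigma_{\theta_t}=h_t^2V_{\theta_{t-1}}$ into the argument of Lemma \ref{lemma:Chap1L3}. You carry out that substitution explicitly, and in doing so you correctly supply the one detail the paper leaves implicit, namely that the $I_2$ term becomes $(1-h_t^2)V_{\theta_{t-1}}$ because the shrinkage map scales the spread of the mixture centres by $\sqrt{1-h_t^2}$, so that $I_1+I_2=h_t^2V_{\theta_{t-1}}+(1-h_t^2)V_{\theta_{t-1}}=V_{\theta_{t-1}}$.
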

\begin{proof}
The proof is based on using (\ref{eq:Chap1E31}) and setting $\Sigma_{\theta_t}=h_{t}^2V_{\theta_{t-1}}$ in Lemma \ref{lemma:Chap1L3}. 
\end{proof}

With kernel smoothing, the SMC approximations of $\theta_{t}|(Y_{1:t-1}=y_{1:t-1})\sim \tilde{p}(\theta_t|y_{1:t-1})$ and $\theta_{t-1}|(Y_{1:t-1}=y_{1:t-1})\sim \tilde{p}(\theta_t|y_{1:t-1})$ have the same first two moments (see Corollary \ref{corollary:Chap1C2}). Finally, defining $\tilde{Z}^i_{t-1|t-1}\triangleq\{X^i_{t-1|t-1};~\tilde{\theta}^i_{t-1|t-1}\}$, the SMC approximation of the ISF density in (\ref{eq:Chap1E16}) with kernel smoothing can be represented as
\begin{align}
\label{eq:Chap1E33}
\tilde{p}({z}_t | {y}_{1:t-1})=&\sum_{i=1}^{N} W^i_{t-1|t-1}p({z}_t | {\tilde{Z}}^i_{t-1|t-1}).
\end{align}
Note that the random particle set ${\{{Z}^i_{t|t-1};W^i_{t|t-1}\}_{i=1}^N\sim \tilde{p}({z}_t| {y}_{1:t-1})}$ from (\ref{eq:Chap1E33}) can be generated by passing ${\{{\tilde{Z}}^i_{t-1|t-1}\}_{i=1}^N}$ through the transition pdfs, such that
\begin{subequations}
\label{eq:Chap1E34}
\begin{align}
X^i_{t|t-1}\sim&~p(x_t|\tilde{Z}^i_{t-1|t-1}),\\
\theta^i_{t|t-1}\sim&~p(\theta_t|\tilde{\theta}^i_{t-1|t-1}),
\end{align}
\end{subequations}
where $1\leq i\leq N$. Using the generated random particle set ${\{{Z}^i_{t|t-1};W^i_{t|t-1}\}_{i=1}^N}$ from (\ref{eq:Chap1E33}), an SMC approximation of the ISF distribution $p(d{z}_t| {y}_{1:t-1})$ can be represented as
\begin{align}
\label{eq:Chap1E35}
\tilde{p}(d{z}_t| {y}_{1:t-1})&=\sum_{i=1}^{N} W^i_{t|t-1}\delta_{Z^i_{t|t-1}}(d{z}_{t}),
\end{align}
where ${\{W^i_{t|t-1}=W^i_{t-1|t-1}\}_{i=1}^N}$. Now to obtain an SMC approximation of the target posterior distribution $p(d{z}_t| {y}_{1:t})$, substituting (\ref{eq:Chap1E35}) into (\ref{eq:Chap1E9}) yields
\begin{subequations}
\begin{align}
\tilde{p}(d{z}_t| {y}_{1:t})\propto &{p({y}_t |{z}_t)} \sum_{i=1}^{N}W^i_{t|t-1}\delta_{{Z}^i_{t|t-1}}(d{z}_{t}),\\	
=& \sum_{i=1}^{N} W^i_{t|t}\delta_{{Z}^i_{t|t-1}}(d{z}_{t}),\label{eq:Chap1E36}
\end{align}
\end{subequations}
where the weight ${W^i_{t|t}}$ in (\ref{eq:Chap1E36}) is given by
\begin{align}
\label{eq:Chap1E37}
W^i_{t|t}=  \frac{W^i_{t|t-1}p({y}_t | {Z}^i_{t|t-1})}{\sum_{i=1}^{N} {W^i_{t|t-1}p({y}_t |{Z}^i_{t|t-1})}}.
\end{align}
Note that in (\ref{eq:Chap1E36}) the importance weights ${\{W^i_{t|t}\}_{i=1}^N}$ are computed using the likelihood function. Finally, the MMSE point estimates for the states and parameters  at ${t\in\mathbb{N}}$ can be computed from (\ref{eq:Chap1E36}) using the procedure outlined in Lemma \ref{lemma:Chap1L1} and Remark \ref{remark:Chap1R2}.
\subsection{Optimal tuning of kernel parameter}
\label{sec:Chap1S4.3}
Although over-dispersion in the SMC approximation of the ISF is corrected using the kernel smoothing, optimal tuning of the kernel parameter ${h_t\in[0,1]}$ remains unclear.
\begin{rmk}
The tuning practices for ${h_t}$ are largely {ad-hoc}. \cite{LW2001} suggested selecting ${h_t=0.1}$; whereas, in \cite{C2005}, $h_t$ was optimized based on historical data-set, and then applied to future batches. These ad-hoc rules deliver a constant ${h_t}$, for which, optimality cannot be established with respect to the incoming data.
\end{rmk}
An optimal tuning rule for ${h_t}$ based on an on-line optimization procedure is proposed in this paper. The tuning rule is based on minimization of the KL divergence between the ISF and the target posterior density at each sampling time. The objective of the optimizer is not only to tune ${h_t}$, but to also project the particles sampled from the ISF in the region of high posterior density. This is to allow for adaptation of the SIR filter for combined state-parameter estimation. A similar idea of adaptive filtering is also proposed in \cite{CMO2008}. In a standard SIR filter, if ${\supp{p(z_t|y_{1:t-1})}}$ is larger or smaller compared to ${\supp{p(y_t|z_t)}}$ then only a few particles in (\ref{eq:Chap1E37}) are assigned higher weights. This is due to insufficient number of particles in the overlapping region (see Figure \ref{figure:Chap1F1}). As discussed in \cite{R2004}, a standard SIR filter is inefficient in handling such situations. This is because in an SIR filter, the particles from the ISF are generated without taking the current measurement into consideration (see (\ref{eq:Chap1E7})). Methods such as ASIR filter \cite{C2005,LW2001,P1999}; progressive correction \cite{OM2000}; and bridging densities \cite{CG2001} make use of current measurements to allow sampling from high-likelihood regions. Proposition \ref{proposition:Chap1P1} provides an optimal tuning rule for controlling the kernel width and for making an SIR filter adaptive and efficient for different values of $\Omega_t\in\mathbb{R}_+$, where: ${\Omega_t\triangleq\tr[\mathbb{V}_{p(Z_t|Y_{1:t-1})}[Z_t]]/\tr[\mathbb{V}_{p(Y_t|Z_{t})}[Y_t]]}$; $\mathbb{R}_+:=[0,\infty)$; and $\tr[\cdot]$ is the trace operator.
\begin{figure*}[t]
   \centering
   \includegraphics[scale=0.55]{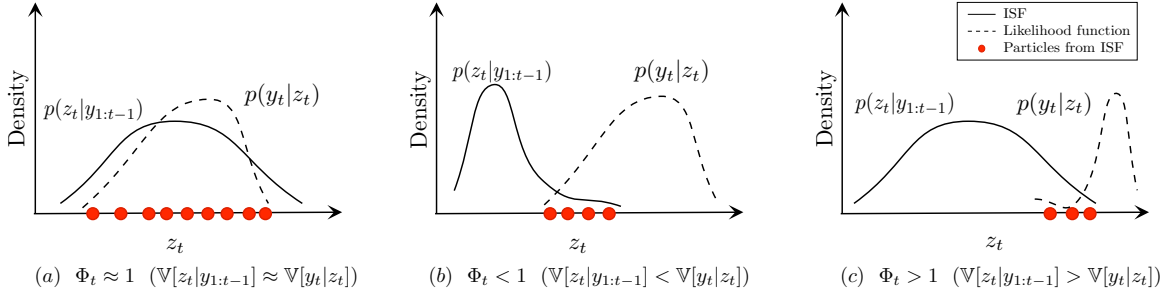}
   \caption{{A schematic diagram to highlight the possible scenarios for different values of $\Omega_t\in\mathbb{R}_+$, where $\Omega_t\triangleq\tr[\mathbb{V}_{p(Z_t|Y_{1:t-1})}[Z_t]]/\tr[\mathbb{V}_{p(Y_t|Z_{t})}[Y_t]]$ and $\tr[\cdot]$ is the trace operator. In Case (a), when $\Omega_t\approx1$, the ISF is mapped in the high likelihood region, which represents an ideal estimation scenario for SIR filters. In Cases (b) and (c), either the ISF is peaked $(\Omega_t<1)$ or the likelihood function is peaked $(\Omega_t>1)$ compared to the other distribution, such that only few number of particles generated from the ISF falls in the likelihood region.}}
   \label{figure:Chap1F1}
\end{figure*}
\begin{proposition}
\label{proposition:Chap1P1}
An optimal tuning for $h_t$ at ${t\in\mathbb{N}}$ based on minimization of the KL divergence between the ISF $p(z_t|y_{1:t-1})$ and target posterior density $p(z_t|y_{1:t})$ is given by
\begin{align}
h^\star_{t}=\argmin_{h_{t}\in[0,1]}\left[-\sum_{i=1}^{N} W^i_{t|t-1}\log[{W}^i_{t|t}]\right],
\end{align}
where: $h^\star_{t}$ is the optimal kernel parameter at $t\in\mathbb{N}$; and ${\{W^i_{t|t-1}\}_{i=1}^N}$ and ${\{{W}^i_{t|t}\}_{i=1}^N}$ are the particle weights given in (\ref{eq:Chap1E35}) and (\ref{eq:Chap1E36}), respectively.
\end{proposition}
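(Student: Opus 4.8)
The plan is to show that, once the importance sampling function and the target posterior are replaced by the particle approximations already derived, the Kullback--Leibler divergence between them reduces, up to an additive term that does not involve $h_t$, to the objective $-\sum_{i=1}^{N} W^i_{t|t-1}\log W^i_{t|t}$; the asserted rule is then immediate.

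First I would write the divergence in the orientation matching the statement, $D_{\mathrm{KL}}\big(p(\cdot|y_{1:t-1})\,\|\,p(\cdot|y_{1:t})\big)=\int_{\mathcal{Z}}p(z_t|y_{1:t-1})\log\frac{p(z_t|y_{1:t-1})}{p(z_t|y_{1:t})}\,dz_t$, which is the natural choice since the empirical measure (\ref{eq:Chap1E35}) represents $p(dz_t|y_{1:t-1})$ and hence integrates test functions against the ISF exactly. Using the Bayes update (\ref{eq:Chap1E8}), the prior density $p(z_t|y_{1:t-1})$ cancels in the ratio, leaving $p(z_t|y_{1:t-1})/p(z_t|y_{1:t})=p(y_t|y_{1:t-1})/p(y_t|z_t)$, so that the divergence becomes $\log p(y_t|y_{1:t-1})-\int_{\mathcal{Z}}p(z_t|y_{1:t-1})\log p(y_t|z_t)\,dz_t$, using $\int p(z_t|y_{1:t-1})\,dz_t=1$.

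Next I would substitute the particle approximation (\ref{eq:Chap1E35}), which turns the integral into $\sum_i W^i_{t|t-1}\log p(y_t|Z^i_{t|t-1})$ and the evidence into $\log\big(\sum_j W^j_{t|t-1}p(y_t|Z^j_{t|t-1})\big)$, consistent with the displayed formula for $p(y_t|y_{1:t-1})$ following (\ref{eq:Chap1E8}). Then I would invert the weight recursion (\ref{eq:Chap1E37}): taking logarithms there gives $\log p(y_t|Z^i_{t|t-1})=\log W^i_{t|t}-\log W^i_{t|t-1}+\log\big(\sum_j W^j_{t|t-1}p(y_t|Z^j_{t|t-1})\big)$. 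Plugging this into the divergence and again using $\sum_i W^i_{t|t-1}=1$, the two copies of $\log\big(\sum_j W^j_{t|t-1}p(y_t|Z^j_{t|t-1})\big)$ cancel and one is left with the compact identity $D_{\mathrm{KL}}=-\sum_i W^i_{t|t-1}\log W^i_{t|t}+\sum_i W^i_{t|t-1}\log W^i_{t|t-1}$.

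Finally I would observe that $\{W^i_{t|t-1}\}_{i=1}^N=\{W^i_{t-1|t-1}\}_{i=1}^N$ (stated below (\ref{eq:Chap1E35})), so the second sum, the negative entropy of the prediction weights, is fixed once the filter reaches time $t$ and does not depend on the kernel parameter $h_t$ chosen at $t$; the only $h_t$-dependence enters through $W^i_{t|t}$, since the particle locations $Z^i_{t|t-1}$ are produced from the shrunk samples via (\ref{eq:Chap1E31})--(\ref{eq:Chap1E34}) and feed into $p(y_t|Z^i_{t|t-1})$. Hence $\argmin_{h_t\in[0,1]} D_{\mathrm{KL}}=\argmin_{h_t\in[0,1]}\big[-\sum_i W^i_{t|t-1}\log W^i_{t|t}\big]$, which is the claim. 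The step I expect to require the most care is the bookkeeping of the $h_t$-dependence --- keeping the prediction weights frozen while recognising the updated weights as functions of $h_t$ through the kernel-smoothed, propagated particles --- together with choosing the KL orientation that is consistent with the available empirical measure so that the evidence terms cancel exactly rather than only approximately.
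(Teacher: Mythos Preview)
Your proposal is correct and follows essentially the same route as the paper: write the KL divergence with the ISF as the first argument, use Bayes' rule (\ref{eq:Chap1E8}) to cancel the prior and reduce the ratio to $p(y_t|y_{1:t-1})/p(y_t|z_t)$, substitute the particle approximation (\ref{eq:Chap1E35}), invoke the weight update (\ref{eq:Chap1E37}) to arrive at $-\sum_i W^i_{t|t-1}\log\!\big[W^i_{t|t}/W^i_{t|t-1}\big]$, and drop the $h_t$-independent entropy term. The only cosmetic difference is that you split off $\log p(y_t|y_{1:t-1})$ before inserting the particle approximation, whereas the paper keeps the ratio intact inside the logarithm and simplifies afterward; the algebra and the final reduction are identical.
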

\begin{proof}
The KL divergence between $p(z_t|y_{1:t-1})$ and $p(z_t|y_{1:t})$ at $t\in\mathbb{N}$ is given by
\begin{align}
\label{eq:Chap1E38}
D_{q||p}(t)=\int_{\mathcal{Z}}\log\left[\frac{p(z_t|y_{1:t-1})}{p(z_t|y_{1:t})}\right]p(dz_t|y_{1:t-1}),
\end{align}
where $D_{q||p}(t)$ is the KL divergence at ${t\in\mathbb{N}}$. Substituting (\ref{eq:Chap1E8}) into (\ref{eq:Chap1E38}) yields
\begin{subequations}
\begin{align}
D_{q||p}(t)=&\int_{\mathcal{Z}}\log\left[\frac{p(y_t|y_{1:t-1})}{p(y_t|z_t)}\right]p(dz_t|y_{1:t-1}),\\
=&\int_{\mathcal{Z}}\log\left[\frac{\int_{\mathcal{Z}}p(y_t|z_t)p(dz_t|y_{1:t-1})}{p(y_t|z_t)}\right]p(dz_t|y_{1:t-1})\label{eq:Chap1E39}.
\end{align}
\end{subequations}
Computing (\ref{eq:Chap1E39}) in closed form is non-trivial for the model considered in (\ref{eq:Chap1E1}); however, substituting (\ref{eq:Chap1E35}) into (\ref{eq:Chap1E39}) yields an SMC approximation of (\ref{eq:Chap1E39}), such that 
\begin{subequations}
\begin{align}
\widehat{D}_{q||p}(h_{t})=&\int_{\mathcal{Z}}\log\left[\frac{\int_{\mathcal{Z}}p(y_t|z_t)\sum_{j=1}^{N} W^i_{t|t-1}\delta_{{Z}^i_{t|t-1}}(d{z}_{t})}{p(y_t|z_t)}\right]\sum_{i=1}^{N} W^i_{t|t-1}\delta_{{Z}^i_{t|t-1}}(d{z}_{t}),\\
=&\sum_{i=1}^{N} W^i_{t|t-1}\log\left[\frac{\sum_{j=1}^{N}W^i_{t|t-1}p(y_t|Z^i_{t|t-1})}{p(y_t|Z^i_{t|t-1})}\right],\label{eq:Chap1E40}
\end{align}
\end{subequations}
where $\widehat{D}_{q||p}(h_{t})$ is an SMC estimate of $D_{q||p}(t)$. Note that the dependence of $\widehat{D}_{q||p}(h_{t})$ on ${h_{t}}$ can be established from (\ref{eq:Chap1E31}) and (\ref{eq:Chap1E34}). Several algebraic manipulations in (\ref{eq:Chap1E40}) followed by substituting (\ref{eq:Chap1E37}) into (\ref{eq:Chap1E40}) yields
\begin{align}
\label{eq:Chap1E41}
\widehat{D}_{q||p}(h_{t})=&-\sum_{i=1}^{N} W^i_{t|t-1}\log\left[\frac{W^i_{t|t}}{W^i_{t|t-1}}\right].
\end{align}
Finally, a constrained optimization problem can be formulated based on minimization of $\widehat{D}_{q||p}(h_{t})$ with respect to $h_t$, such that
\begin{align}
\label{eq:Chap1E42}
h^\star_{t}=\argmin_{h_{t}\in[0,1]}\widehat{D}_{q||p}(h_{t}).
\end{align}
Substituting (\ref{eq:Chap1E41}) into (\ref{eq:Chap1E42}) yields
\begin{subequations}
\begin{align}
h^\star_{t}=&\argmin_{h_{t}\in[0,1]}\left[-\sum_{i=1}^{N} W^i_{t|t-1}\log\left[\frac{W^i_{t|t}}{W^i_{t|t-1}}\right]\right],\\
=&\argmin_{h_{t}\in[0,1]}\left[-\sum_{i=1}^{N} W^i_{t|t-1}\log\left[W^i_{t|t}\right]\right],\label{eq:Chap1E43}
\end{align}
\end{subequations}
where (\ref{eq:Chap1E43}) follows from the fact that $\sum_{i=1}^{N}$ $W^i_{t|t-1}\log\left[W^i_{t|t-1}\right]$ is independent of $h_t$, which completes the proof.
\end{proof}
\noindent
\begin{rmk}
\label{remark:Chap1RRR}
Proposition \ref{proposition:Chap1P1} provides an optimal tuning rule for (a) correcting over-dispersion in ISF and; (b) making Ad-SIR filter efficient for different values of $\Omega_t\in\mathbb{R}_+$. Note that other tuning rules for ${h_t\in[0,1]}$ can also be readily used in place of Proposition \ref{proposition:Chap1P1}, provided, it is compatible with the developments of previous sections.
\end{rmk}
\subsection{Resampling}
\label{sec:Chap1S4.4}
In importance sampling, degeneracy is a very common problem; wherein, after a few sampling time instances,  the distribution of the weights in (\ref{eq:Chap1E36}) becomes skewed. As a result, the variance of the weights in (\ref{eq:Chap1E36}) increases over time \cite{D2001}; thereby, requiring a large computational effort to update the particles, whose contributions are negligible. See \cite{ZC2003,R2004} for further details. A systematic resampling scheme \cite{KG1996} is adopted here that eliminates the low weighted particles by replacing them with particles with large weight. The choice of systematic resampling is supported by an easy implementation procedure and a lower order of computational complexity $\mathcal{O}(N)$ \cite{A2002}. A systematic resampling step involves drawing $N$ new particles ${\{{Z}^i_{t|t}\}_{i=1}^N}$, with replacement from a set of particles ${\{{Z}^i_{t|t-1}\}_{i=1}^N}$ realized from the ISF, such that the following equality holds
\begin{align}
\label{eq:Chap1E44}
&\pr({Z}^i_{t|t}={Z}^i_{t|t-1})=W^i_{t|t}
\end{align}
for all ${1\leq i\leq N}$. Here $\pr(\cdot)$ is the probability measure. The resampled particles ${\{{Z}^i_{t|t}\}_{i=1}^N\sim p(z_t|y_{1:t})}$ are identically distributed with weights reset to ${\{W^i_{t|t}=N^{-1}\}_{i=1}^N}$. 
\begin{rmk}
A key feature of the resampling step in (\ref{eq:Chap1E44}) is that it takes an independent set of particles ${\{{Z}^i_{t|t-1}\}_{i=1}^N}$ and returns a set of dependent particles ${\{{Z}^i_{t|t}\}_{i=1}^N}$. This is due to the large number of replications of highly weighted particles. As discussed in \cite{S2011}, using correlated particles ${\{{Z}^i_{t|t};~W^i_{t|t}=N^{-1}\}_{i=1}^N}$ in (\ref{eq:Chap1E36}) further degrades the accuracy of the MMSE point estimate computed in Remark \ref{remark:Chap1R2}. In \cite{BN2000}, the authors showed that the rate of convergence of the MMSE point estimates to the true posterior mean decreases as correlation in ${\{{Z}^i_{t|t}\}_{i=1}^N}$ increases. To avoid any performance degradation, the MMSE point estimates are computed before the resampling step.
\end{rmk} 
\begin{rmk}
Stratified \cite{KG1996,LC1998} or residual \cite{LC1998} resampling can also be used as an alternative to the systematic resampling used here. See \cite{ZC2003} for other resampling methods.
\end{rmk}
\section{Missing measurements}
\label{sec:Chap1S4.5}
Missing measurements are common in the process industries, where measurements may not become available at all sampling time instants. An approach to allow Bayesian state-parameter estimation with real-time missing measurements is presented in this section.

From (\ref{eq:Chap1E37}) it is clear that if ${\{Y_t={y}_t\}}$ at ${t\in\mathbb{N}}$ is missing then (\ref{eq:Chap1E37}) can no longer be used to compute (\ref{eq:Chap1E36}) or the MMSE estimates obtained therefrom. To address this, if ${\{Y_t={y}_t\}}$ at ${t\in\mathbb{N}}$ is missing then the ISF ${p(z_t|y_{1:t-1})}$ in (\ref{eq:Chap1E7}) is used instead to compute a one-step ahead predicted MMSE point estimate for the states and parameters at ${t\in\mathbb{N}}$. The procedure to obtain an MMSE estimate under missing measurements is outlined next.
\begin{rmk}
\label{remark:Chap1R5}
Let the SMC approximation of the ISF $p(dz_t|y_{1:t-1})$ be represented by (\ref{eq:Chap1E35}) then a one-step ahead predicted MMSE point estimate for the states and parameters at ${t\in\mathbb{N}}$ can be computed as $\widehat{Z}_{t|t-1}\triangleq\int_{\mathcal{Z}}z_tp(dz_t|y_{1:t-1})\approx\sum_{i=1}^{N}W^i_{t|t-1}Z^i_{t|t-1}$.
\end{rmk}
It is important to note that if ${\{Y_t={y}_t\}}$ at ${t\in\mathbb{N}}$ is missing then the posterior $p(z_t|y_{1:t})$ or its KL divergence with $p(z_t|y_{1:t-1})$ at ${t\in\mathbb{N}}$ cannot be computed either. In other words, $h_{t}$ cannot be optimally tuned (based on Proposition \ref{proposition:Chap1P1}) under missing measurements.  

Note that with Proposition \ref{proposition:Chap1P1}, optimal tuning for $h_{t}$ under missing measurement is not necessary. This is because tuning $h_t$ according to Proposition \ref{proposition:Chap1P1} corrects the variance inflation problem in the SMC approximation of $p(z_t|y_{1:t-1})$ and also projects the particles from it onto the region of high posterior density $p(z_t|y_{1:t})$ (see Remark \ref{remark:Chap1RRR}); however, if $p(z_t|y_{1:t})$ is unavailable at ${t\in\mathbb{N}}$, Proposition \ref{proposition:Chap1P1} only addresses the variance inflation in the SMC approximation of $p(z_t|y_{1:t-1})$, which can be corrected with any $h_t\in[0,1]$ value. 
\begin{rmk}
As a general rule, if ${\{Y_t={y}_t\}}$ at ${t\in\mathbb{N}}$ is missing,  $h_t$ will be assigned its previous optimal value $h^{\star}_{t-1}$. Note that, if necessary, the user can choose any ${h_t \in[0,1]}$ value, or can optimize it based on other tuning rules as well (see Remark \ref{remark:Chap1RRR}).
\end{rmk}
After computing the one-step ahead predicted MMSE state-parameter point estimate at $t\in\mathbb{N}$ (see Remark \ref{remark:Chap1R5}), the Law of Total Probability on $p({z}_{t}| {y}_{1:t-1})$  yields
\begin{align}
\label{eq:Chap1E45}
p({z}_{t+1}| {y}_{1:t-1})=\int_{\mathcal{Z}}p({z}_{t+1}| {z}_{t})p(d{z}_{t}| {y}_{1:t-1}),
\end{align}
where ${p({z}_{t+1}| {y}_{1:t-1})}$ is a two-step ahead prior density, and also the ISF for the sampling time ${t+1}$ under missing ${\{Y_t={y}_t\}}$. Since (\ref{eq:Chap1E45}) does not have a closed form solution, an SMC approximation of it can be obtained by substituting (\ref{eq:Chap1E35}) into (\ref{eq:Chap1E45}), such that
\begin{align}
\label{eq:Chap1E46}
\tilde{p}({z}_{t+1}| {y}_{1:t-1})=&\sum_{i=1}^{N} W^i_{t|t-1}p({z}_{t+1}|{Z}^i_{t|t-1}).
\end{align}
To correct the variance inflation in (\ref{eq:Chap1E46}), kernel smoothing discussed in Section \ref{sec:Chap1S4.2} is applied, such that with kernel smoothing the ISF can now be approximated as follows
\begin{align}
\label{eq:Chap1E47}
\tilde{p}(z_{t+1}| y_{1:t-1})=&\sum_{i=1}^{N} W^i_{t|t-1}p({z}_{t+1}| {\tilde{Z}}^i_{t|t-1}),
\end{align}
where ${\{\tilde{Z}^i_{t|t-1}\}_{i=1}^N=\{X^i_{t|t-1};~\tilde{\theta}^i_{t|t-1}\}_{i=1}^N}$, and
\begin{align}
\label{eq:Chap1EE34}
\tilde{\theta}^i_{t|t-1}= \sqrt{1-h_{t+1}^2}~{\theta}^i_{t|t-1} +(1-\sqrt{1-h_{t+1}^2})~\widehat{\theta}_{t|t-1}.
\end{align} 
In (\ref{eq:Chap1EE34}), $h_{t+1}$ can be tuned based on Proposition \ref{proposition:Chap1P1}, using the next available measurement ${\{Y_{t+1}=y_{t+1}\}}$. Note that from (\ref{eq:Chap1E47}), random particles can be generated by passing ${\tilde{Z}}^i_{t|t-1}$ through $p({z}_{t+1}| {\tilde{Z}}^i_{t|t-1})$ for all ${1\leq i\leq N}$. Using the set of generated random particles, the ISF distribution $p(d{z}_{t+1}| {y}_{1:t-1})$ can be represented as
\begin{align}
\label{eq:Chap1E48}
\tilde{p}(d{z}_{t+1}| {y}_{1:t-1})=&\sum_{i=1}^{N} W^i_{t+1|t-1}\delta_{{Z}^i_{t+1|t-1}}({d{z}}_{t+1}),
\end{align}
where ${\{{Z}^i_{t+1|t-1};~W^i_{t+1|t-1}=w^i_{t|t-1}\}_{i=1}^N}$ is a set of $N$ random particles from (\ref{eq:Chap1E47}). 

Finally, using the next available measurement ${\{Y_{t+1}=y_{t+1}\}}$, the posterior distribution $p(dz_{t+1}| y_{1:t-1},y_{t+1})$ at ${t+1}$ can be approximated using SMC methods, such that
\begin{align}
\label{eq:Chap1E49}
\tilde{p}(d{z}_{t+1}| {y}_{1:t-1},{y}_{t+1})=& \sum_{i=1}^{N} W^i_{t+1|t+1}\delta_{{Z}^i_{t+1|t-1}}(d{z}_{t+1}),
\end{align}
where $\{W^i_{t+1|t+1}\}_{i=1}^N$ are computed using (\ref{eq:Chap1E37}).
\begin{rmk}
\label{remark:Chap1R4}
The on-line Bayesian state-parameter estimation method presented in this section assumes that measurements are missing at random time instants. Note that, the proposed method can also handle cases with multiple consecutively missed measurements. \end{rmk}
\section{On-line estimation algorithm}
\label{sec:Chap1S5}
Algorithms \ref{algorithm:Chap1A1} and \ref{algorithm:Chap1A2} outlines the procedure for estimating ${Z_t\in\mathcal{Z}}$  in (\ref{eq:Chap1E1}) for complete and missing measurements, respectively. Convergence of these algorithms is discussed next.
\begin{algorithm}[t]
  \caption{Complete measurements}
  \label{algorithm:Chap1A1}
  \begin{algorithmic}[1]
    \STATE Select a prior pdf ${Z_0\sim p(z_0)}$ for the states and parameters.
    \STATE Generate $N$ independent and identically distributed particles ${\{{Z}^i_{0|-1}\}_{i=1}^N\sim p({z}_0)}$ and set the associated weights to $\{W^i_{0|-1}=N^{-1}\}_{i=1}^N$. Set $t\leftarrow 1$.
    \STATE Sample ${\{Z^i_{t|t-1}\}_{i=1}^N\sim p(z_t|y_{1:t-1})}$ using (\ref{eq:Chap1E33}). Set ${\{W^i_{t|t-1}=N^{-1}\}_{i=1}^N}$.
    \WHILE{$t\in\mathbb{N}$}
    \STATE Use ${\{Y_t=y_t\}}$ and compute the importance weights $\{W^i_{t|t}\}_{i=1}^N$ from (\ref{eq:Chap1E37}).
    \STATE Compute the point estimate $\widehat{Z}_{t|t}$ using the procedure outlined in Remark \ref{remark:Chap1R2}.
    \STATE Resample the particle set ${\{Z^i_{t|t-1}; W^i_{t|t}\}_{i=1}^N}$ with replacement using (\ref{eq:Chap1E44}).
    \STATE Compute $h^{\star}_{t+1}$ using Proposition \ref{proposition:Chap1P1} and generate $\{{\tilde{\theta}}^i_{t|t}\}_{i=1}^N$ using (\ref{eq:Chap1E31}).
    \STATE Sample ${\{Z^i_{t+1|t}\}_{i=1}^N\sim p(z_{t+1}|y_{1:t})}$ using (\ref{eq:Chap1E33}). Set ${\{W^i_{t+1|t}=N^{-1}\}_{i=1}^N}$.
    \STATE Set $t\leftarrow t+1$.
    \ENDWHILE
  \end{algorithmic}
\end{algorithm}
\begin{algorithm}[h]
  \caption{Missing measurements}
  \label{algorithm:Chap1A2}
  \begin{algorithmic}[1]
    \STATE Select a prior pdf ${Z_0\sim p(z_0)}$ for the states and parameters.
    \STATE Generate $N$ independent and identically distributed particles ${\{{Z}^i_{0|-1}\}_{i=1}^N\sim p({z}_0)}$ and set the associated weights to $\{W^i_{0|-1}=N^{-1}\}_{i=1}^N$. Set $t\leftarrow 1$.
      \STATE Sample ${\{Z^i_{t|t-1}\}_{i=1}^N\sim p(z_t|y_{1:t-1})}$ using (\ref{eq:Chap1E33}). Set ${\{W^i_{t|t-1}=N^{-1}\}_{i=1}^N}$.
    \WHILE{$t\in\mathbb{N}$}
    \IF{${\{Y_t=y_t\}}$ is available}
    \STATE Use ${\{Y_t=y_t\}}$ and compute the importance weights $\{W^i_{t|t}\}_{i=1}^N$ from (\ref{eq:Chap1E37}).
    \STATE Compute the point estimate $\widehat{Z}_{t|t}$ using the procedure outlined in Remark \ref{remark:Chap1R2}.
    \STATE Resample the particle set ${\{Z^i_{t|t-1}; W^i_{t|t}\}_{i=1}^N}$ with replacement using (\ref{eq:Chap1E44}).
    \ENDIF
    \IF{${\{Y_t=y_t\}}$ is unavailable}
    \STATE Compute the predicted point estimate $\widehat{Z}_{t|t-1}$ using the procedure in Remark \ref{remark:Chap1R5}.
    \ENDIF
    \IF{${\{Y_{t+1}=y_{t+1}\}}$ is available}
    \STATE Compute $h^{\star}_{t+1}$ using Proposition \ref{proposition:Chap1P1} and generate $\{{\tilde{\theta}}^i_{t|t}\}_{i=1}^N$ using (\ref{eq:Chap1E31}).
    \STATE Sample ${\{Z^i_{t+1|t}\}_{i=1}^N\sim p(z_{t+1}|y_{1:t})}$ using (\ref{eq:Chap1E33}). Set ${\{W^i_{t+1|t}=N^{-1}\}_{i=1}^N}$.
    \ENDIF
    \IF{${\{Y_{t+1}=y_{t+1}\}}$ is unavailable}
    \STATE Set $h^{\star}_{t+1}\leftarrow h^{\star}_{t}$ and generate $\{{\tilde{\theta}}^i_{t|t-1}\}_{i=1}^N$ using (\ref{eq:Chap1EE34}).
    \STATE Sample ${\{Z^i_{t+1|t-1}\}_{i=1}^N\sim p(z_{t+1}|y_{1:t-1})}$ using (\ref{eq:Chap1E47}). Set $\{W^i_{t+1|t-1}=w^i_{t|t-1}\}_{i=1}^N$.
    \ENDIF
    \STATE Set $t\leftarrow t+1$.
    \ENDWHILE
  \end{algorithmic}
\end{algorithm}
\section{Convergence}
\label{sec:Chap1S6}
Computing the conditional mean of ${Z_t|(Y_{1:t}=y_{1:t})\sim p(z_t|y_{1:t})}$ requires evaluating the multi-dimensional integral over ${\mathcal{Z}}$. As stated earlier, obtaining an analytical solution to the MMSE estimate is not possible for the model considered in (\ref{eq:Chap1E1}). Algorithms {\ref{algorithm:Chap1A1} and \ref{algorithm:Chap1A2} deliver an $N$-particle approximation to the MMSE estimates. Establishing theoretical convergence for Algorithms \ref{algorithm:Chap1A1} and \ref{algorithm:Chap1A2} is beyond the scope of this paper; however, some of the practical issues affecting their convergence, include:
\begin{list}{\labelitemi}{\leftmargin=1em}
\item Finding an optimal ${N<\infty}$, for which the $N$-particle MMSE estimate $\widehat{Z}^N_{t|t}$ would converge to true MMSE estimate ${Z}^\star_{t|t}$ in a ball of some predefined radius is non-trivial; however, note that the estimates can be made accurate for sufficiently large $N$.
\item Inaccurate noise model can prevent the estimates from converging to their true values. To circumvent this problem the noise models are known in their distribution class and their parameters estimated along with model parameters (see Assumption \ref{assumption:Chap1A1}).
\item  Poor choice of ${Z_0\sim p(z_0)}$ can cause serious convergence issues. The problem is particularly severe while estimating the discrete states of hybrid systems. Any discrete change in the state require an adaptive mechanism for redefining the ISF for the states. Since estimation in hybrid systems is not included in the scope of this paper, it will not be considered here. Consideration will be made in selecting $p(z_0)$ in Section \ref{sec:Chap1S8}.
\end{list}
The procedure to reduce computational complexity of Algorithms \ref{algorithm:Chap1A1} and \ref{algorithm:Chap1A2} is discussed next.
\begin{rmk}
\label{remark:Chap1R6}
Algorithms \ref{algorithm:Chap1A1} and \ref{algorithm:Chap1A2} compute an estimate of ${Z_t\in\mathcal{Z}}$. Note for time-invariant systems, estimation of $\theta_t$ can be bypassed if ${\exists t_\alpha \in\mathbb{N}}$, ${\lim_{N\to +\infty}\widehat{\theta}^N_{t|t}-\theta^{\star}=0~\forall t\geq t_\alpha}$, where ${\theta^\star\in\Theta}$ is a vector of true system parameters. The rationale behind this approach is to reduce the computational complexity of  Algorithms \ref{algorithm:Chap1A1} and \ref{algorithm:Chap1A2}  by simply selecting ${\widehat{\theta}_{t|t}=\widehat{\theta}_{t_\alpha|t_\alpha}~ \forall t\geq t_\alpha}$. Caution is required while estimating in a time-varying systems.
\end{rmk}
In the next section, some of the key features of the on-line estimation algorithm presented in this paper are compared against that of an off-line parameter estimation algorithm.
\section{Comparison with off-line algorithm}
\label{sec:Chap1S7}
In processes, where developing an efficient off-line parameter estimator is required, an EM algorithm has been very successful. The EM algorithm is a popular off-line ML based method for parameter estimation in non-linear SSMs with non-Gaussian noise. The key advantage with EM is that it can be adopted under a variety of industry relevant situations. In \cite{S2010,S2011}, the authors used the off-line EM algorithm to estimate the process and noise model parameters ({e.g.,} mean and covariance) under complete measurements. Extension of the EM algorithm for estimation under missing measurements was considered in \cite{G2008}. 

In terms of computational complexity, the particle smoothing step in EM requires $\mathcal{O}(N^2Tn)$ calculations at each iteration \cite{S2010,G2008,S2011}, where $n$ is the state dimension and $T$ is the total number of measurements. Smoothing step with computational complexity $\mathcal{O}(NTn)$ has also appeared \cite{DGMO2011}.  This highlights the scalability issues with the EM algorithm when $n$ is large. The brute-force optimization in the M step of EM further adds to the computational cost. From a theoretical perspective, EM has an advantage in terms of asymptotic efficiency and consistency; however, in practice, solving the maximization step of EM can be prohibitive, especially in large dimensional dynamical systems with long measurement sequence. Depending on the dimension of the system, the number of particles and samples used, the algorithm may take hours to run on a state-of-the art desktop computer \cite{G2008}.

Focussing only on the parameter estimation aspect of Algorithms \ref{algorithm:Chap1A1} and \ref{algorithm:Chap1A2}, the developed method can estimate the process and noise model parameters in real-time with either complete or missing measurement set. The efficacy of the proposed method in dealing with these cases is demonstrated in Section \ref{sec:Chap1S8}. A distinct advantage of the proposed algorithm is that it can also be used for estimating time-varying systems. Computational complexity of Algorithms \ref{algorithm:Chap1A1} and \ref{algorithm:Chap1A2} until time $T$ is of the order $\mathcal{O}(NTs)$ whereas the optimization approach introduced in Proposition \ref{proposition:Chap1P1} has complexity $\mathcal{O}(N)$, where $r$ is the dimension of unknown parameters. Also, by including Remark \ref{remark:Chap1R6}, the computational cost can further be reduced. Direct quantification of the bias introduced through the use of artificial dynamics approach might be difficult as pointed in \cite{Kantas2009}; however, \cite{ABBF2012} proposed the use of PCRLB  for assessing the quality of the parameter estimates. This assessment is done by comparing the MSE for the estimates against the theoretical PCRLB. Experiments in \cite{ABBF2012} have confirmed that using ADA, with the tuning rule in Proposition 1 yields numerically reliable estimates.  
\begin{rmk}
Comparison is not intended to draw conclusions on the validity of the involved algorithms. Instead, it is provided to highlight key features of the Ad-SIR filter in handling situations, which have been considered so far only under off-line settings.
\end{rmk}
\section{Numerical illustrations}
\label{sec:Chap1S8}
In this section, efficacy of Algorithms \ref{algorithm:Chap1A1} and \ref{algorithm:Chap1A2} is illustrated through two numerical examples. The first example is taken from \cite{G2008} and the second example from \cite{S2011}. In this study, the estimation problem is formulated to estimate both states and parameters of a non-linear system, but the analysis is focussed mainly on on-line parameter estimation as it has been less studied compared to the state estimation problem.
\subsection{Example 1: A non-linear and non-Gaussian system}
\label{sec:Chap1S8.1}
Consider the following stochastic SSM \cite{G2008,GA2005}
\begin{subequations}
\label{eq:Chap1E50}
\begin{align}
X_{t+1}&=\alpha_tX_t+\beta_t U_t+V_t,\label{eq:Chap1E50a}\\
Y_t&=\gamma_t\cos X_t+W_t,\label{eq:Chap1E50b}
\end{align}
\end{subequations}
where: ${U_t\sim\mathcal{N}(u_t|0,1)}$; ${V_t\sim\mathcal{N}(v_t|0,Q_t)}$; and ${W_t\sim\mathcal{N}(w_t|0,R_t)}$. The process and measurement noise models in (\ref{eq:Chap1E50a}) and (\ref{eq:Chap1E50b}), respectively, are known in their distribution class and mean, but unknown in their respective variances ${Q_t\in\mathbb{R}_{+}}$ and ${R_t\in\mathbb{R}_{+}}$. \cite{G2008} used this example for off-line estimation of process and noise model parameters under complete and missing measurements using EM algorithm. In this study, real-time state-parameter estimation will be setup using Algorithms \ref{algorithm:Chap1A1} and \ref{algorithm:Chap1A2}.

For comparison with results reported in \cite{G2008}, similar simulation conditions are maintained to the extent possible. As in \cite{G2008}, the initial condition for the true state and true parameters in (\ref{eq:Chap1E50}) are selected as $x^\star_0=1$ and ${\theta}_t^{\star}\triangleq[\alpha_t^{\star}; ~\beta_t^{\star}; ~\gamma_t^{\star};~Q_t^{\star}; ~R_t^{\star}]={[0.9;~1;~1;~0.1;~0.1]~\forall t\in[1,T]}$, respectively.

To estimate ${\theta_t\in\mathbb{R}^5}$, MC simulations are performed using $45$ random realizations of input-output data $\{u_{1:T};~y_{1:T}\}$. For each input-output data set, MMSE estimates ${\widehat{\theta}_{t|t}~\forall t\in[1,T]}$ are computed. For this study a finite filtering time ${T=1000}$ is selected with ${N=20000}$ particles. A large $T$ and $N$ values help reduce variation in ${\widehat{\theta}_{t|t}}$ arising due to randomness in measurement and error associated with SMC approximations, respectively.
\begin{table*}[t]
\caption{{Parameter estimates and standard error computed using Algorithms \ref{algorithm:Chap1A1} and \ref{algorithm:Chap1A2} based on 45 MC simulations.}}
\centering
\begin{tabular}{cc|cccc}
\hline
\multicolumn{1}{c}{Parameter}&\multicolumn{1}{c|}{True}&\multicolumn{4}{c}{Parameter estimates $\pm$ standard deviation $(\widehat{\theta}_{T|T}\pm V^{0.5}_{\theta_T})$}\\
\cline{3-6}
\multicolumn{1}{c}{$\theta_t$}&\multicolumn{1}{c|}{$\theta_t^\star$}&\multicolumn{1}{c|}{0\% Missing}&\multicolumn{1}{c|}{10\% Missing}&\multicolumn{1}{c|}{25\% Missing}&\multicolumn{1}{c}{50\% Missing}\\
\hline
$\alpha_t$&$0.90$&$0.9027\pm0.0060$& $0.9017\pm0.0074$&$0.9014\pm0.0077$&$0.9041\pm0.0079$\\
$\beta_t$&$1.0$&$0.9926\pm0.0210$& $0.9946\pm0.0203$&$0.9913\pm0.0278$&$0.9865\pm0.0367$\\
$\gamma_t$&$1.0$&$1.0179\pm0.0225$&$1.0145\pm0.0208$&$1.0105\pm0.0275$&$0.9743\pm0.0415$\\
$Q_t$&$0.10$&$0.1068\pm0.0124$&$0.1054\pm0.0145$&$0.1037\pm0.0167$&$0.0915\pm0.0197$\\
$R_t$&$0.10$&$0.1068\pm0.0090$&$0.0892\pm0.0076$&$0.0932\pm0.0129$&$0.1101\pm0.0216$\\
\hline
\end{tabular}
\label{table:Chap1T1}
\end{table*}
The prior density ${\theta_0\sim \mathcal{N}(\theta_0|M_\theta,C_\theta)}$ is selected as a mutually independent multi-variate normal distribution with mean ${M_\theta=[0.5;~ 0.5;~ 0.5;~0.2;~0.2]}$ and covariance $C_\theta=\text{diag}([1;~ 1;~ 1;~0.05;~0.05])$, where $\text{diag}(\cdot)$ is a diagonal matrix. 

In this simulation study, estimation is performed on four different experiment runs each with $0\%,~10\%,~25\%$ and $50\%$ randomly missing measurements. A MC based MMSE parameter estimates $\widehat{\theta}_{T|T}$ along with the standard estimation error at sampling time $t=T$ are given in Table \ref{table:Chap1T1}. In each of the four experiments the estimated parameters $\widehat{\theta}_{T|T}$ are in the neighbourhood of $\theta_T^\star$.  Also, comparing with the results reported in \cite{G2008}, the proposed method delivers $\widehat{\theta}_{T|T}$ in the neighbourhood of $\theta_T^\star$ with high statistical reliability. Higher parameter accuracy can be attributed to large $T$ and $N$ values used here in contrast to $T=100$ and $N=150$ used by \cite{G2008}. This highlights the advantage of Ad-SIR filter over EM algorithm; wherein, large $N$ can be used to approximate the posterior without significant increase in the computational load.
\begin{figure}[t!]
   \centering
   \includegraphics[scale=0.73]{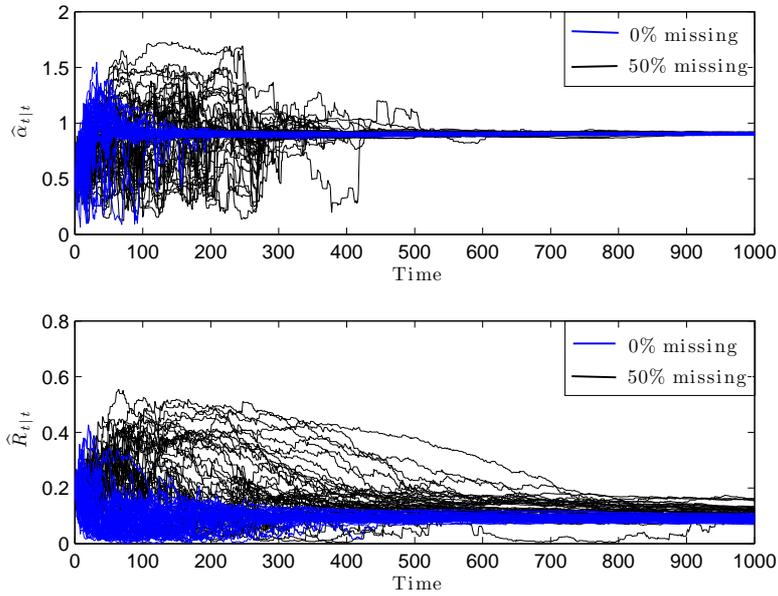}
   \caption{{MMSE estimates of:  [Top] $\widehat{\alpha}_{t|t}$ and [Bottom] $\widehat{R}_{t|t}$ computed using Algorithms \ref{algorithm:Chap1A1} and \ref{algorithm:Chap1A2} based on 45 simulations.}}
   \label{figure:Chap1F2}
\end{figure}
\begin{figure}[h!]
   \centering
   \includegraphics[scale=0.73]{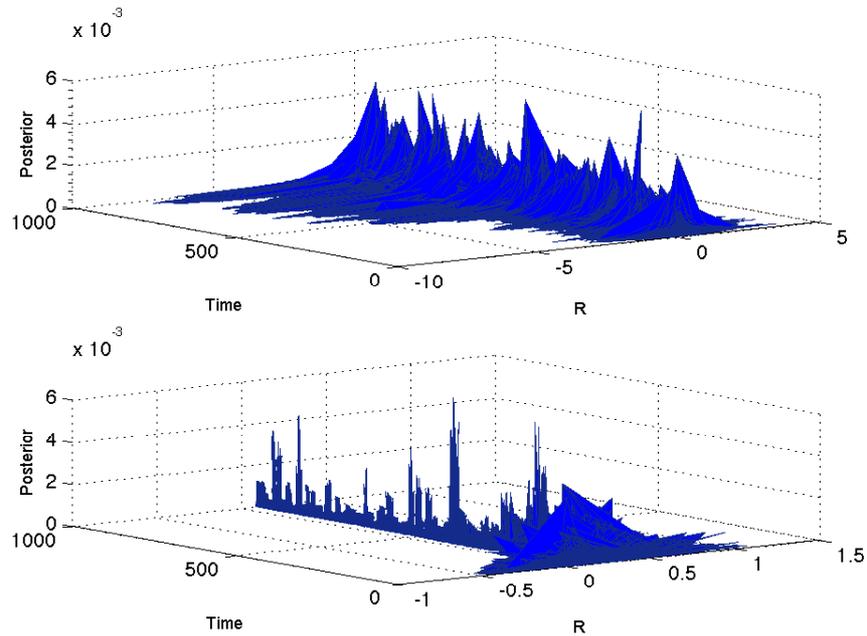}
   \caption{Posterior distribution $\tilde{p}(R_t|y_{1:t})$ $\forall t\in[1,T]$ under $0\%$ missing measurements computed using Algorithm \ref{algorithm:Chap1A1}: [Top] without kernel smoothing method, and [Bottom] with kernel smoothing method and tuning rule selected as Proposition \ref{proposition:Chap1P1}.}
   \label{figure:Chap1F3}
\end{figure}
\begin{figure}[h!]
   \centering
   \includegraphics[scale=0.73]{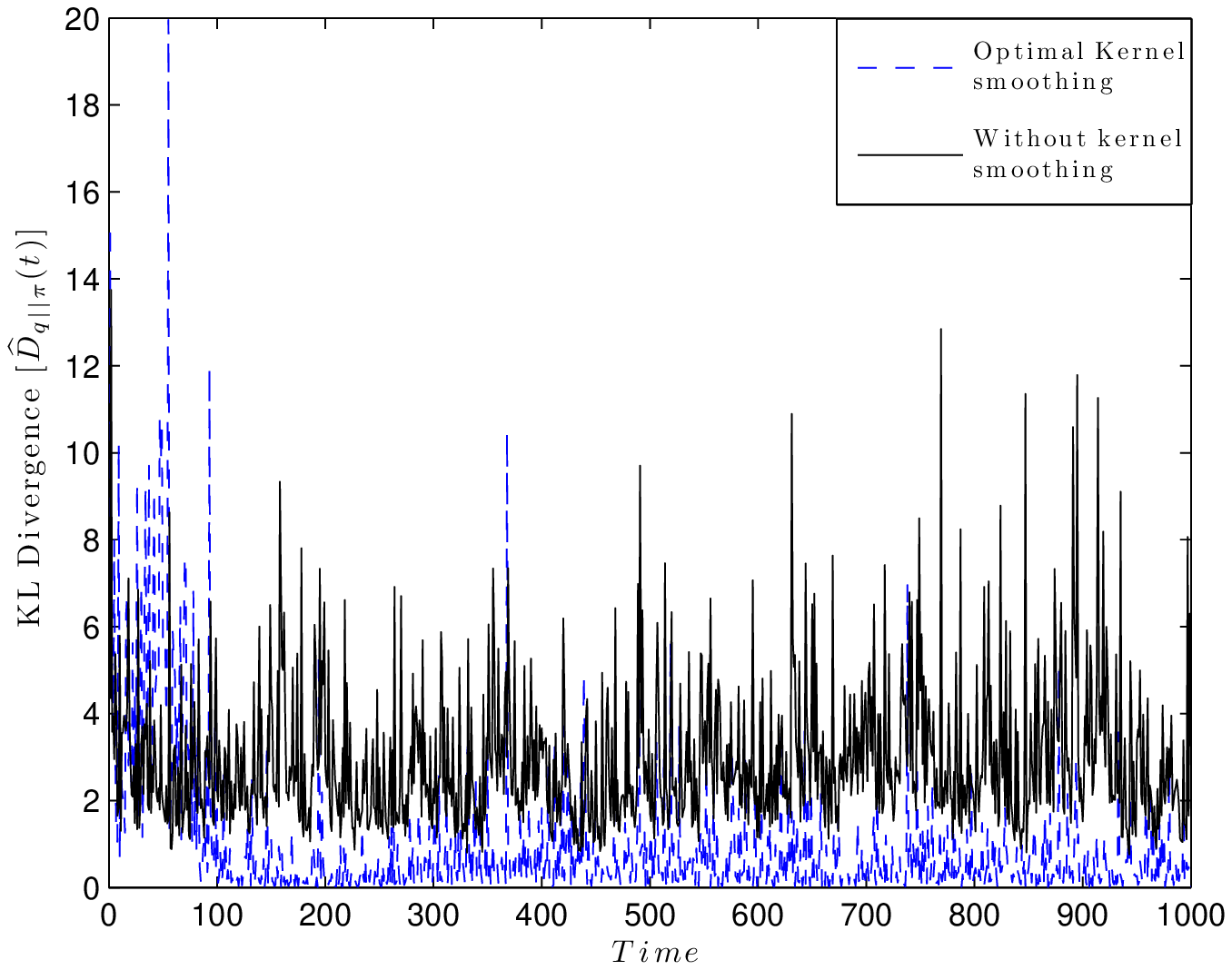}
   \caption{{KL divergence between $\tilde{p}(z_t|y_{1:t-1})$ and $\tilde{p}(z_t|y_{1:t})~\forall t\in[1,T]$ computed using Algorithm \ref{algorithm:Chap1A1}. The divergence is computed with $T=1000$ and $N=20000$.}}
   \label{figure:Chap1F4}
\end{figure}

Figure \ref{figure:Chap1F2} shows the MMSE estimates $\widehat{\alpha}_{t|t}$ and $\widehat{R}_{t|t}~\forall t$ $\in[1,T]$ computed using Algorithm \ref{algorithm:Chap1A1} (for $0\%$ missing measurements) and Algorithm \ref{algorithm:Chap1A2} (for $50\%$ missing measurements). Under $0\%$ missing measurements, the estimates converge in the neighbourhood of $\theta_{T}^\star$ within a few sampling time instants; whereas, as the percentage of missing measurements increases to $50\%$, the estimates take longer to convergence.

Computation of $\widehat{\theta}_{t|t}~\forall t\in[1,T]$ took $210$ seconds (for $0\%$ missing measurements) on a 3.33 GHz Intel Core i5 processor running on Windows 7. Computation under missing measurements is even faster, as the optimization step for tuning the kernel parameter is not required at all sampling time instants.

Figure \ref{figure:Chap1F3}[Top] validates the comment made in Remark \ref{remark:Chap1RE3} that without correcting the inflation problem, SMC based marginalized posterior density estimate would continue to disperse over time. The advantage of using the kernel smoothing method with Proposition \ref{proposition:Chap1P1} is evident from Figure \ref{figure:Chap1F3}[Bottom]; wherein, the proposed method not only corrects dispersion in the marginalized posterior density, but also reduces it substantially around the estimates. In Figure \ref{figure:Chap1F4}, KL divergence between $\tilde{p}(z_t|y_{1:t-1})$ and $\tilde{p}(z_t|y_{1:t})$ is shown. Comparing the mean and  variance of the two trajectories in Figure \ref{figure:Chap1F4} it is clear that Proposition \ref{proposition:Chap1P1} significantly reduces divergence between the ISF and posterior density.

In summary, Figures \ref{figure:Chap1F2} through \ref{figure:Chap1F4} validate the usefulness of Proposition \ref{proposition:Chap1P1} in achieving convergence of $\widehat{\theta}_{T|T}$ in the neighbourhood of $\theta_T^\star$ under compete and missing measurements. Another non-linear and non-Gaussian example is considered next.
\subsection{Example 2: A non-linear and non-Gaussian system}
\label{sec:Chap1S8.2}
In Section \ref{sec:Chap1S8.1}, efficacy of Algorithms \ref{algorithm:Chap1A1} and \ref{algorithm:Chap1A2} was established under different percentage of missing measurements. In this study, estimation capability of Algorithm \ref{algorithm:Chap1A1} is demonstrated for different values of ${\Gamma_t \in\mathbb{R}_+}$, where ${\Gamma_t\triangleq\mathbb{V}_{p(Z_t|Z_{t-1})}[Z_t]/\mathbb{V}_{p(Y_t|Z_t)}[Y_t]}$. Consider the following discrete-time, stochastic non-linear autonomous SSM \cite{D2001,S2011}
\begin{subequations}
\label{eq:Chap1E51}
\begin{align}
X_{t+1}&=\frac{X_t}{\alpha_t}+\frac{\beta_tX_t}{1+X^2_t}+\kappa_t\cos(1.2t)+V_t,\\
Y_t&=\gamma_tX^2_t+W_t,
\end{align}
\end{subequations}
where: ${V_t\sim\mathcal{N}(v_t|0,Q_t)}$; and ${W_t\sim\mathcal{N}(w_t|0,R_t)}$. The true initial state is chosen as $x_0^\star=5$ and the true parameters are selected as $\theta_t^\star\triangleq [\alpha^\star_t;~\beta^\star_t;~\kappa^\star_t;~\gamma^\star_t;~Q^\star_t;~R^\star_t]=[2.0;~25;~8.0;~0.05;~\{0.10;~1.0\};~\{0.10;~1.0\}]~\forall t\in[1,T]$, where $\{\cdot;\cdot\}$ denote a set of possible discrete values for $Q_t$ and $R_t$, considered in this study. In the simulation, the algorithm parameters are selected as $T=100$ seconds and $N=20000$ particles.

On-line estimation of process and noise model parameters in (\ref{eq:Chap1E51}) is considered for three independent cases, with each differing in the choice of $\Gamma_t~\forall t\in[1,T)$. In the first experiment ${\Gamma_t=1}$ (with ${Q_t=0.1;~R_t=0.1)}$ is selected. For the second and third experiment, ${\Gamma_t=0.1}$ (with ${Q_t=0.1;~R_t=1)}$ and  ${\Gamma_t=10}$ (with ${Q_t=1;~R_t=0.1)}$ is selected, respectively. The choice of the experiments denote the cases in Figure \ref{figure:Chap1F1}. 

The prior density ${\theta_0\sim \mathcal{N}(\theta_0|M_\theta,C_\theta)}$ is selected as a mutually independent multi-variate normal distribution with mean ${M_\theta=[1;~20;~10;~1;~0.5;~0.5]}$ and covariance $C_\theta=\text{diag}([1;~15;~5;~1;~1;~1])$. Large variance ensures that $\theta_0^\star$ is included in the $\supp p(\theta_0)$.

As in Section \ref{sec:Chap1S8.1}, $45$ MC simulations are performed.  Using Algorithm \ref{algorithm:Chap1A1}, a MC MMSE parameter estimates $\widehat{\theta}_{T|T}$ for the three experiments are given in Table \ref{table:Chap1T2}. Small uncertainties associated with $\widehat{\theta}_{T|T}$ across the range of $\Gamma_t$ values suggest high statistical reliability of the estimates. Moreover, comparing the estimates with the true values it is evident that the estimate $\widehat{\theta}_{T|T}$ is in the neighbourhood of $\theta_{T|T}^\star$.  Algorithm \ref{algorithm:Chap1A1} yields the most reliable estimates for $\Gamma_t=1$. This is because $\Gamma_t=1$ presents an ideal scenario for filtering.

Estimates of $\widehat{\gamma}_{t|t}$ and $\widehat{Q}_{t|t}$ for ${\Gamma_t=10}$ are given in Figure \ref{figure:Chap1F5}. On average, $\widehat{\gamma}_{t|t}$ converges in the neighbourhood of $\gamma^\star_T$ in about $t=10$ seconds, whereas $\widehat{Q}_{t|t}$ takes $t=65$ seconds to converge. For this simulation, computation of ${\widehat{\theta}_{t|t}~\forall t\in[1,T]}$ took $21$ seconds of CPU time to complete. Figure \ref{figure:Chap1F6} gives the kernel parameter computed using Proposition \ref{proposition:Chap1P1}.
\begin{table*}[t]\
\caption{{Parameter estimates and standard error computed using Algorithms \ref{algorithm:Chap1A1} for different $\Gamma_t~\forall t\in[1,T]$ based on $45$ MC simulations.}}
\centering
\begin{tabular}{cc|ccc}
\hline
\multicolumn{1}{c}{Parameter}&\multicolumn{1}{c|}{True}&\multicolumn{3}{c}{Parameter estimates $\pm$ standard deviation $(\widehat{\theta}_{T|T}\pm V^{0.5}_{\theta_T})$}\\
\cline{3-5}
\multicolumn{1}{c}{$\theta_t$}&\multicolumn{1}{c|}{$\theta^\star_t$}&\multicolumn{1}{c|}{$\Gamma_t=1$}&\multicolumn{1}{c|}{$\Gamma_t=0.1$}&\multicolumn{1}{c}{$\Gamma_t=10$}\\
\multicolumn{1}{c}{}&\multicolumn{1}{c|}{}&\multicolumn{1}{c|}{$(Q_t=0.1;~R_t=0.1)$}&\multicolumn{1}{c|}{$(Q_t=0.1;~R_t=1)$}&\multicolumn{1}{c}{$(Q_t=1;~R_t=0.1)$}\\
\hline
$\alpha_t$&$2.0$&$2.0358\pm0.0400$& $2.0694\pm0.0812$&$2.0845\pm0.0791$\\
$\beta_t$&$25$&$24.250\pm1.5273$& $23.686\pm1.5997$&$23.916\pm1.6806$\\
$\kappa_t$&$8.0$&$7.9004\pm0.3873$&$7.7611\pm0.4154$&$7.6728\pm0.5329$\\
$\gamma_t$&$0.05$&$0.0530\pm0.0052$&$0.0557\pm0.0061$&$0.0566\pm0.0067$\\
$Q_t$&$-$&$0.1202\pm0.0154$&$0.1284\pm0.0204$&$0.9144\pm0.1543$\\
$R_t$&$-$&$0.1084\pm0.0151$&$0.9054\pm0.1126$&$0.1072\pm0.0157$\\
\hline
\end{tabular}
\label{table:Chap1T2}
\end{table*}

The advantage of using KL divergence  based tuning rule for $h_t$ is highlighted in Figure \ref{figure:Chap1F7}. Figure \ref{figure:Chap1F7} gives the SMC based approximate marginalized posterior distribution $\tilde{p}(\beta_T|y_{1:T})$ for different choices of $h_t~\forall t\in[1,T]$. It is clear that with the proposed tuning rule, Algorithm \ref{algorithm:Chap1A1} projects $\tilde{p}(\beta_T|y_{1:T})$ around the true parameter $\beta^\star_{T}=25$ (see Table \ref{table:Chap1T2}). 

Interestingly, with $h_t=0.01~\forall t\in[1,T]$, a single particle representation of $\tilde{p}(\beta_T|y_{1:T})$ is obtained (see Figure \ref{figure:Chap1F7}). This is because as ${h_t\rightarrow 0}$, ${\Sigma_{\theta_t}=h^2_tV_{\theta_{t-1}}\rightarrow 0~\forall t \in[1,T]}$. In the limiting case, when ${h_t=0}$, $\beta_t$ has a stationary dynamics. It is well known that using SMC methods in such situations result in parameter sample degeneracy (see Section \ref{sec:Chap1S1}).
\begin{figure}[t]
   \centering
   \includegraphics[scale=0.75]{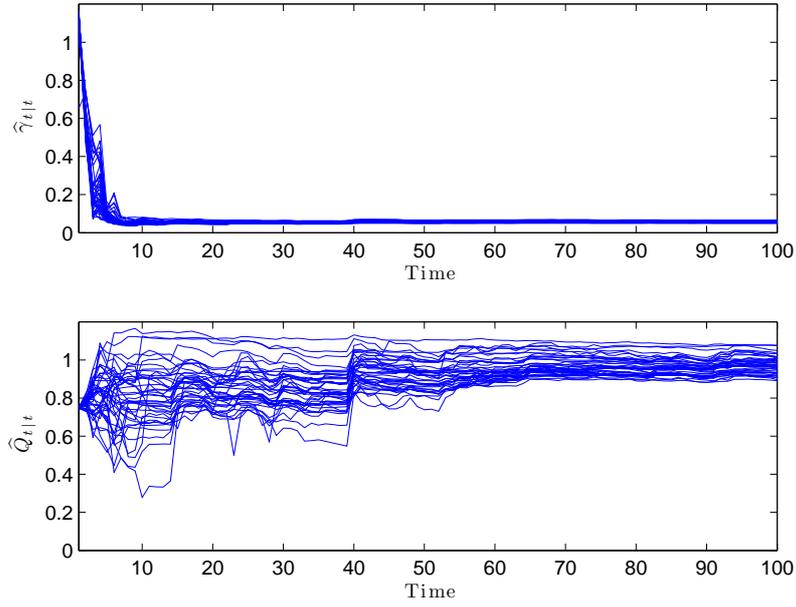}
   \caption{{MMSE estimates of: [Top] $\widehat{\gamma}_{t|t}$ and [Bottom] $\widehat{Q}_{t|t}$ computed using Algorithm \ref{algorithm:Chap1A1} for $\Gamma_t=10~\forall t\in[1,T]$. It is based on $45$ MC simulations with $0\%$ missing measurements.}}
   \label{figure:Chap1F5}
\end{figure}
\begin{figure}[h!]
   \centering
   \includegraphics[scale=0.75]{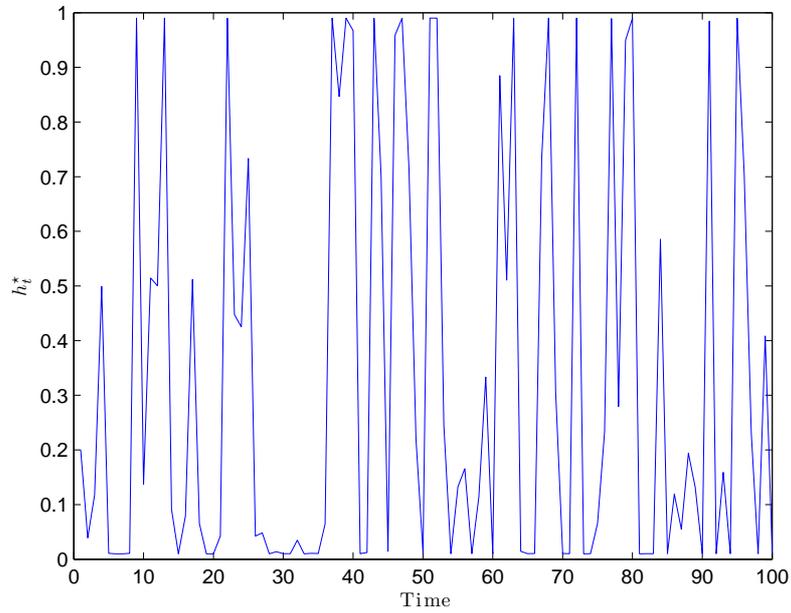}
   \caption{{Optimal kernel $h_t~\forall t\in[1,T]$ tuned using Proposition \ref{proposition:Chap1P1} for $\Gamma_t=10~\forall t\in[1,T]$ and $0\%$ missing measurements.}}
   \label{figure:Chap1F6}
\end{figure}
\begin{figure}[h!]
   \centering
   \includegraphics[scale=0.75]{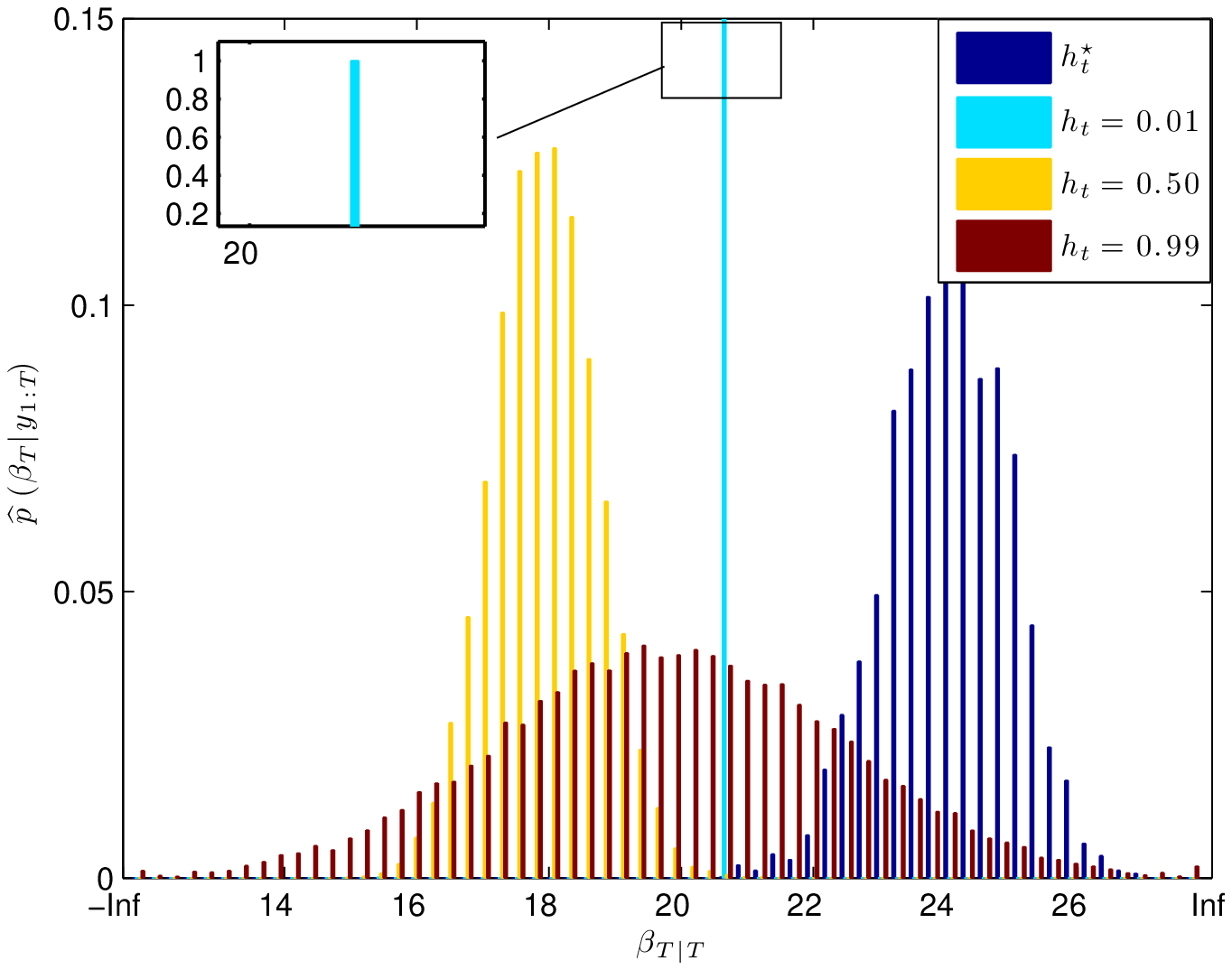}
   \caption{{Approximate marginalized posterior distribution $\tilde{p}(\beta_t|y_{1:t})$ at $t=T$ computed based on different tuning rules for $h_t~\forall t\in[1,T]$. In the graph, $h_t^*$ represents the optimal tuning based on Proposition \ref{proposition:Chap1P1} (see Figure \ref{figure:Chap1F6}).}}
   \label{figure:Chap1F7}
\end{figure}

Studying the other extreme case, with ${h_t=0.99~\forall t\in[1,T]}$ the posterior density ${\tilde{p}(\beta_T|y_{1:T})}$ in Figure \ref{figure:Chap1F7} has a wide support. This can again be understood by analysing ${\tilde{p}(\beta_t|y_{1:t})~\forall t\in[1,T]}$ in limits. As ${h_t\rightarrow 1}$, the set of smoothed particles in (\ref{eq:Chap1E31}) are projected closer to the mean ${{\widehat{\theta}}_{t-1|t-1}}$. Under the limiting case, when ${h_t=1}$ the marginalized ISF is given by ${\tilde{p}({\theta}_t | {y}_{1:t-1})=\sum_{i=1}^{N}W^i_{t-1|t-1}\mathcal{N}(\theta_t|{\widehat{\theta}}_{t-1|t-1}, V_{\theta_{t-1}})}$. Note that, generating particles from $\tilde{p}({\theta}_t | {y}_{1:t-1})$ under the limiting case only depends on the estimated parameter covariance $V_{\theta_{t-1}}$. It is easy to see that in such situations, arbitrarily wide distributions for the SMC based approximate marginalized posterior density can be obtained depending on ${V_{\theta_{t-1}}}$ values.

In summary, this simulation study demonstrates the efficacy of the proposed optimal tuning rule for a range of process to measurement noise variance ratio.
\section{Conclusions}
In this paper, a Bayesian algorithm for on-line state and parameter estimation in discrete-time, stochastic non-linear state-space models is presented. The proposed algorithm uses an adaptive SIR filter to deliver an minimum mean-square error estimate at each filtering time. The extension of the algorithm to handle missing measurements in real-time is also presented. The usual variance inflation problem introduced by adding artificial parameter dynamics is corrected by introducing a kernel smoothing algorithm. An optimal tuning rule for the kernel smoothing parameter is presented under an on-line optimization framework. The usual degeneracy issues with sequential-importance-resampling filter under different process to measurement noise ratios are avoided through the kernel smoothing process based on Kullback-Leibler divergence. The proposed algorithm is an `optimization-free' estimator, which makes it efficient and computationally fast, which is a major advantage over the traditional maximum-likelihood based methods. Finally, the performance of the proposed method was demonstrated on two non-linear simulation examples.
\section*{Acknowledgement}
This work was supported by the Natural Sciences and Engineering Research Council (NSERC), Canada.
\bibliographystyle{IEEETran}
\bibliography{BIB}
 \end{document}